\newtheorem{theorem}{Theorem}
\newtheorem{corollary}[theorem]{Corollary}
\newtheorem{proposition}{Proposition}
\newtheorem{lemma}{Lemma}
\newtheorem{claim}{Claim}
\newtheorem*{theorem*}{Theorem}
\newtheorem{definition}{Definition}
\newtheorem{assumption}{Assumption}
\newcolumntype{L}[1]{>{\raggedright\let\newline\\arraybackslash\hspace{0pt}}m{#1}}
\newcolumntype{C}[1]{>{\centering\let\newline\\arraybackslash\hspace{0pt}}m{#1}}
\newcolumntype{R}[1]{>{\raggedleft\let\newline\\arraybackslash\hspace{0pt}}m{#1}}
\begin{document}

\begin{titlepage}
\title{Testing Single Crossing Property with  Stochastic Choice Data\footnote{I thank Abhishek Arora, V Bhaskar, Svetlana Boyarchenko, Francesco Conti, Faheem Gilani, Raghav Malhotra, Maxwell Stinchcombe, Vasiliki Skreta, Caroline Thomas,  Thomas Wiseman, and Leeat Yariv for helpful comments. Laurens Cherchye, Bram De Rock, and Thomas Demuynck provided crucial help and guidance during my research visit to Universite Libré de Bruxelles, where part of this work was undertaken. Sydney Neal provided excellent research assistance.  Finally, I thank seminar and conference audiences at UT Austin, ULB, Delhi School of Economics, and Stony Brook University for helpful discussion and comments. I also acknowledge
the financial support received from 
Fonds de la Recherche Scientifique (FNRS), Belgium, which partly supported my research visit to ULB. Any remaining errors are my own.}}
\author{Tanay Raj Bhatt\footnote{trbhatt@utexas.edu, Department of Economics, University of Texas at Austin.}}
\date{September 2025}
\maketitle
\begin{abstract}
\noindent In a typical model of private information and choice under uncertainty, a decision maker observes a signal, updates her prior beliefs using Bayes rule, and maximizes her expected utility. If the decision maker's utility function satisfies the single crossing property, and the information structure is ordered according to the monotone likelihood ratio, then the comparative statics exhibit monotonicity with respect to signals. We consider the restrictions placed by this model of signal processing on state conditional stochastic choice data. In particular, we show that this model rationalizes a state conditional stochastic choice dataset if and only if the dataset itself is ordered according to the monotone likelihood ratio. A straightforward application of the main result shows the conditions under which the analyst can infer when one DM is more informed than the other. \\
\\
\vspace{0in}\\
\noindent\textbf{Keywords:} Monotone Comparative Statics, Stochastic Choice data, Single Crossing Property\\
\vspace{0in}\\
\noindent\textbf{JEL Codes:} D81, D90\\
\bigskip
\end{abstract}
\setcounter{page}{0}
\thispagestyle{empty}
\end{titlepage}
\pagebreak \newpage

\doublespacing

In many models of choice under uncertainty, a decision maker (DM) is postulated to have preferences over action-state pairs, where the state is unobserved by the DM. The DM may have some private information, which is typically modeled as a mapping from the set of states to a set of signal realizations. If the DM is Bayesian, following a signal realization, she updates her prior beliefs using Bayes rule and the (known) structure of private information, and chooses an action to maximize her expected utility.  Such models of signal processing are now commonplace in economics. 

However, in an applied context, it is not always clear how such private information comes about. Indeed, a private signal may purely be an object of subjective considerations of the DM. If that is the case, the private signal would remain unobserved by an econometrician who wishes to perform some empirical exercise based on such a model. On the other hand, even if the private signal is ``objective" in some sense, the analyst may not have any information about how such a signal is acquired by the DM, and therefore, it may still remain unobserved.    

In an important contribution, \citealp{athey2002monotone} establishes conditions under which the comparative statics results in this setting exhibit monotonicity. In particular, if the DM's utility function satisfies the \textit{single crossing property} (SCP) and the information structure is ordered according to the \textit{maximum likelihood ratio} (MLR), then the DM's choice exhibits \textit{monotone comparative statics}\footnote{Essentially, single crossing property states that if higher action is preferred in a given state, then it must also be preferred at a higher state. If an information structure is MLR-ordered, then under higher signal realizations higher states are more  likely.} (MCS) - that is, higher signal realizations induce higher optimal actions.

The single crossing property is an ordinal generalization of the increasing differences property\footnote{A function $u$ exhibits increasing differences in $(x, \theta)$ if, $x'>x$ and $\theta'>\theta$ imply $u(x', \theta') - u(x, \theta') > u(x', \theta) - u(x, \theta)$. Clearly, this is a cardinal property. See \citealp{topkis1998supermodularity}.}. Since it implies monotone comparative statics, and due to its straightforward economic interpretation resting on strategic complementarity, the SCP has turned out to be an extremely important tool for applications - both in applied theory as well as in the empirical literature. For instance, in mechanism design, single crossing property allows us to pin down optimal transfers. In applied work, an intimately related property has been used to solve certain optimal taxation problems (see \citealp{mirrlees1971exploration}). In IO entry models, single crossing property is often invoked and facilitates  equilibrium analysis. In labor economics, the single crossing property is closely related to the so called Le Chatelier principle which helps characterize comparative static results (see \citealp{dekel2022comparative} for a recent work). The single crossing property also has some nice implications for preference aggregation, which proves to be quite useful in applied work (see, for instance \citealp{gans1996majority}).   

In this paper we ask, what restrictions does the single crossing property impose on certain kinds of  observable data? Often, available data - especially experimental data - may take a \emph{stochastic} form. The analyst may observe only some distribution(s) over  
available actions and may not possess any knowledge of either the DM's utility function or her private information. It is therefore important to understand whether the behavior of a Bayes rational agent imposes any restrictions on the stochastic data observed by the analyst. \citealp{caplin2015testable} show that rationalizability by a Bayes rational agent with access to some private information is equivalent to a single condition. Here, we ask what if the single crossing and MLR-ordered information structure impose any additional restrictions on top of those identified by \citealp{caplin2015testable}. The importance of this question is illustrated by the aforementioned examples. Since the SCP has proven to be such an important property, it seems worthwhile to see if it imposes meaningful testable restrictions on observable data.                  

Athey's results suggest that the stochastic data should exhibit certain kind of monotonicity. If $(i)$ the DM's utility function is single crossing; and $(ii)$ private information structure is MLR-ordered, then we know from \citealp{athey2002monotone} that $(iii)$  the comparative statics are monotone. How does $(i)-(iii)$ affect the data? In our setup, neither the utility function nor the private information structure is observed. Only a set of conditional distributions over actions are observed. If $(ii)$ holds, then higher states lead to higher signal realizations \textit{more often} - or with higher likelihood. If $(i)$ holds as well, then $(iii)$ is true, and so higher signals would lead directly to higher actions. One can expect this to be exhibited in data: since higher signals are more likely, and they lead to higher actions, we would expect \emph{higher actions to be more likely at higher states}. This monotonicity translates into the data itself being MLR-ordered - that is, higher actions have higher relative probabilities at higher states. Proposition \ref{thm:main_binary} shows that this intuition holds for the binary case. In fact, we establish something much stronger for the binary case. We show that this condition is not only \textit{both necessary and sufficient} for rationalization by an SCP utility function, but such data can \textit{only} be rationalized by a utility function satisfying the single crossing property. Furthermore, the main result of the paper shows that for any finite number of states, the MLR condition is both necessary and sufficient for rationalizability by a single crossing utility \textit{and} an MLR ordered information structure.   

The MLR condition is a rather strong restriction, and it should be stressed that what is actually being tested is not simply the hypothesis of a single crossing utility, but \emph{the joint hypothesis of a single crossing utility and an MLR-ordered information structure}. The main conjecture is therefore that rationalizability by a single crossing utility function \emph{and} MLR-ordered information structure is equivalent to a single condition, which requires the data to be MLR-ordered. Although the proof for the general case \textit{does} suggest a test for the single crossing property alone, we believe that a test for the joint hypothesis is more attractive in light of Athey's results.   

The main result suggests an interesting application. Because an MLR ordered dataset implies that that analyst may tentatively accept the hypothesis of a DM with single crossing utility and MLR ordered information. The Lehmann order ranks information structures based on their information content. In particular, one MLR-ordered information structure is Lehmann higher than the other if and only if, for every single crossing utility function, it yields a higher ex-ante expected utility. It is shown that, if two datasets are Lehmann ordered, then so are the information structures that rationalize them. This would allow the analyst to conclude that the DM to which  the former dataset corresponds is in some sense revealed to be more informed. That is, the result allows an analyst to perform inference on how informed DMs just from choice data.    

\textbf{Related literature.} Our paper tries to answer a question similar to ones asked within the literature on revealed preference theory, although the dataset we consider is quite different from finite data typically considered in classical revealed preference theory.   \citealp{green1991revealed}, \citealp{green1986expected}, \citealp{echenique2015savage}, \citealp{bayer2013ambiguity} are prominent examples of classical revealed preference theory for expected utility. However, this set of papers considers a much richer setting. They focus primarily on consumption or asset demand, while none explicitly feature information processing or consider the single crossing property. \citealp{caplin2015testable},  \citealp{doval2023core}, \citealp{rehbeck2023revealed} consider stochastic datasets and consider rationalizability by a Bayesian DM. The latter two consider rationalizability when the analyst only observes marginal distribution(s) over actions, but knows the DM's utility function. Our paper can thus be thought of as imposing particular shape restrictions in the setting of \citealp{caplin2015testable}.  \citealp{lazzati2018nonparametric} consider similar shape restrictions, but they prove a rationalizability theorem with finite data, where the variation in data comes from ``budgets" of actions. With finite data, they show that single crossing is indistinguishable from a weaker property developed by \citealp{quah2009comparative} - called interval dominance property. They also derive similar results for cross-sectional datasets and Bayesian games.

 \citealp{apesteguia2017single} posits a single crossing random utility model, dubbed SCRUM. The primary difference between that paper and ours is contextual: they deal with stochastic choice - there is a set of preferences with 
a probability measure defined on them that needs to be recovered. Here, the preferences are not stochastic - only the data is - and one is interested in the existence of \emph{some} preference and information structure consistent with the given data. Put another way, while the structure of the dataset available to the analyst is somewhat similar across these two groups of papers, the postulated \textit{data generating process} is different.

The next section presents the model, characterizes the dataset available to the analyst, and defines the corresponding notion of rationalizability. Section 2 presents the main result for the binary case - where the single crossing is indistinguishable from Bayesian expected utility maximization. Section 3 presents an example, and the two sections that follow establish necessity and sufficiency, respectively, for the general case. Section 6 shows how the main result can be used to perform inference on relative ``informedness" of DMs, and the final section presents concluding discussion.      

\section{The Model}

We begin this section with the canonical model of a Bayesian decision maker with access to some information structure. The implications of the single crossing property on monotone comparative statics are then presented. This is the model that the analyst wishes to test. The second subsection delineates the dataset that the analyst has access to, and the corresponding notions of rationalizability we are concerned with in this paper. 

\subsection{Preliminaries} 
Let $\Theta$ be a finite set of states and let $A$ be a finite set of actions, with $|A| \ge |\Theta|$. Typical elements of these sets are denoted by $\theta$ and $a$, respectively. Suppose that both $\Theta$ and $A$ are (strictly) ordered according to some exogenously given binary relations $>_\Theta, >_A$.\footnote{In particular, $>_\Theta, >_A$ are assumed to be strict linear orders. That is, they are complete, transitive, and contain no ``equalities". While the results on which this paper builds are given for partial orders, we restrict attention to linear orders for brevity. Presence of unordered alternatives/states/signals does not affect the results reported here.} To avoid notational clutter, in what follows, the subscripts are suppressed and the symbol $>$ is used to denote both the orderings. A decision maker (DM) wants to choose an action $a\in A$ without observing the state so as to maximize her expectation of her utility function $u:A\times \Theta\rightarrow \mathbb{R}$ over states.        
     
Let $S$ be some set of signal realizations, with $|S|\ge |\Theta|$. Let $S$ be ordered by some fixed binary relation $>_S$ as well. An information structure is a collection $\mathcal{I} \equiv \{\mu(\cdot|\theta)\}_{\theta\in \Theta}\subset \Delta (S)$. That is, each state induces a distribution over signal realizations. Let $\mu_0$ be the prior distribution over $\Theta$\footnote{To avoid unnecessary complications in the exposition, we assume that $\mu_0(\theta)>0$ for each $\theta\in\Theta$. This is without loss of generality since, if any state $\theta$ has a 0 probability, we can just drop that state and restrict attention to the set of states $\Theta/\{\theta\}$ instead. For revealed preference exercises, any such state is irrelevant.}. The decision maker is said to be Bayesian if she obtains a signal realization $s \in S$, forms a belief $\mu(\cdot|s)\in \Delta(\Theta)$ using Bayes rule, and chooses an action $a^*(s) \in \arg\max_A\mathbb{E}_{\mu(\cdot|s)}[u(a, \theta)]$. 

Each signal realization $s\in S$ induces a posterior distribution $\mu(\cdot|s) \in \Delta(\Theta)$. That is, if the signal realization observed by the DM is $s\in S$, then updating the prior belief using Bayes rule leads to the posterior distribution $\mu(\cdot|s)\in\Delta(\Theta)$.   Thus, ex-ante, the probability of distribution $\mu(\cdot|s)$ being induced is simply the probability of signal $s$ being observed, that is $\sum_{\theta\in \Theta}\mu_0(\theta)\mu(s|\theta)$. It is well known, at least since \citealp{blackwell1951comparison}, that there is a bijection between an information structure and the distribution over posteriors induced by it. Therefore, the information structure can equivalently be represented as a mapping $\pi:\Theta\rightarrow \Delta(\Delta(\Theta))$. Going forward, both these representations shall be employed interchangeably. When the latter representation is employed, $\pi(\gamma|\theta)$ will be used to denote the probability placed by the information structure on posterior belief $\gamma$ in state $\theta$.

One is often interested in conditions under which higher signal realizations translate to higher action choices by the DM - this refers to monotone comparative statics (MCS). That is, a Bayesian DM exhibits monotone comparative statics if,\footnote{However, the above definition of MCS is applicable only when there is always a unique optimal choice - which may not always be the case. In case of multiple solutions, sets of solutions are ordered using the \emph{strong set order} relation $\ge_{sso}$. In particular, given two sets $K_1, K_2$, say that $K_1 >_{sso} K_2$ if and only if for any $x' \in K_1, x\in K_2$, $\max\{x, x'\}\in K_1$ and $\min\{x, x'\}\in K_2$. Note that, when $K_1$ and $K_2$ are singletons, this relation reduces to $x'\ge x$.} 
\begin{align}
    s' > s \Rightarrow \arg\max_{a\in A}\mathbb{E}_{\mu(\cdot|s')}[u(a, \theta)] \ge \arg\max_{a\in A}\mathbb{E}_{\mu(\cdot|s)}[u(a, \theta)]. 
\end{align}

\citealp{athey2002monotone} provides conditions on $u$ and $\mathcal{I}$ under which monotone comparative statics (MCS) are obtained in this model. Some additional definitions are needed before Athey's result can be stated. 

\begin{definition} 
A utility function $u:A\times \Theta\rightarrow \mathbb{R}$ satisfies \emph{strict single crossing property (SCP)} in $(a, \theta)$ if,\footnote{The SCP is an ordinal generalization of increasing differences. However, while the property of increasing differences is symmetric, SCP is not. That is, if $u$ satisfies SCP in $(a, \theta)$, it need not, in general, be the case that $u(a', \theta'')> u(a', \theta')\Rightarrow u(a'', \theta'')> u(a'', \theta')$.} for states $\theta'' > \theta'$ and actions $a'' > a'$, 
\begin{align*}
    u(a'', \theta') \ge u(a', \theta') &\Rightarrow   u(a'', \theta'') \ge u(a', \theta''), \text{ and }\\
    u(a'', \theta') > u(a', \theta') &\Rightarrow   u(a'', \theta'') > u(a', \theta'').
\end{align*}
\end{definition}

The single crossing property says that if, for a given state a higher action is preferred, then the higher action must be preferred at higher states as well. 

Given two densities $\lambda$ and $\nu$ over some set $S\subset \mathbb{R}$, $\lambda$ is said to dominate $\nu$ in the \emph{MLR-order} if $\frac{\lambda(s)}{\nu(s)}$ is increasing in $s\in S$. Often, $\lambda$ is called an \emph{MLR shift} of $\nu$. The notion of MLR-shift is used to formalize the idea that \textit{higher states are more likely} under $\lambda$. It extends naturally to a family of conditional distributions:       

\begin{definition}
    A collection of conditional probability distributions $\{\nu(\cdot|s)\}\subset \Delta(\Theta)$ is said to be (strictly) \emph{MLR-ordered} if, for any $s, s'\in S, \theta, \theta'\in\Theta$ such that, $s' > s$ and $\theta' > \theta$, 
    \begin{align*}
            \frac{\nu(\theta'|s')}{\nu(\theta|s')} (>) \ge  \frac{\nu(\theta'|s)}{\nu(\theta|s)}
    \end{align*}
    An information structure $\mathcal{I}$ is said to be MLR-ordered if $\{\mu(\cdot|\theta)\}_{\theta\in\Theta}$ is MLR-ordered.
\end{definition}

So, if a collection of distributions over states conditional on signals is MLR-ordered, then, at higher signals, higher states are more likely. If the above inequality holds for all $\theta < \theta'$ and $s < s'$, write $\nu(\cdot|s') \succeq_{MLR} \nu(\cdot|s)$.\footnote{Note that, $\nu(\cdot|s') \succ_{MLR} \nu(\cdot|s)$ if and only if $\nu(\cdot|\theta') \succ_{MLR} \nu(\cdot|\theta)$, where $\nu(\cdot|\theta)$ is the distribution over signals conditional over states corresponding to $\nu(\cdot|s)$.}

We call a Bayesian DM with a single crossing utility function and an MLR-ordered information structure a \textit{Monontone Bayesian Expected Utility Decision Maker (MBEU-DM)}. Among other things, \citealp{athey2002monotone} shows that together, the two properties of SCP utility and MLR-ordered information imply monotone comparative statics. One version of the result is stated below.   

\begin{theorem*}{\textbf{(\citealp{athey2002monotone}, Theorem 2)}}
 Let $\Theta, A$ be partially ordered sets. Consider a DM characterized by $u$ with access to some MLR-ordered information structure $\mathcal{I}$. If $u$ satisfies the single crossing property, then the DM exhibits monotone comparative statics.\footnote{In fact, \citealp{athey2002monotone} shows something much stronger: $(i)$ MCS obtains for every single crossing utility if and only if the information structure is MLR-ordered and $(ii)$ MCS obtains for every MLR-ordered information structure if and only if the utility is single crossing. That is, together, the two conditions form a \emph{minimal pair of sufficient conditions} for MCS - neither of the two can be weakened any further. However, it must be noted that in the class of models considered in \citealp{athey2002monotone}, the set of available actions is not fixed, but changes with the parameter (in accordance with the strong set order). If the constraint set is taken to be fixed, \citealp{quah2009comparative} show that single crossing can be further weakened to a property they call \textit{interval dominance}.}  \label{thm:athey} 
\end{theorem*}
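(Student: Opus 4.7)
The plan is to reduce the statement to Milgrom and Shannon's monotonicity theorem by showing that the ex-post expected utility $U(a,s)\equiv \sum_{\theta\in\Theta} u(a,\theta)\,\mu(\theta\mid s)$ inherits the single crossing property from $u$, now in the pair $(a,s)$, whenever the information structure is MLR-ordered. Once $U$ is single crossing in $(a,s)$ on the fixed constraint set $A$, Milgrom--Shannon delivers $\arg\max_{a\in A}U(a,s')\ge_{sso}\arg\max_{a\in A}U(a,s)$ for $s'>s$, which is exactly the MCS conclusion. So essentially all the work is in propagating the single crossing of $u$ through the Bayesian expectation.

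For the key step, fix $a''>a'$ and $s''>s'$, and let $\Delta(\theta):=u(a'',\theta)-u(a',\theta)$. The hypothesis that $u$ is strictly single crossing in $(a,\theta)$ translates directly into $\Delta$ having a single crossing in $\theta$: there exists a threshold $\theta^*$ such that $\Delta(\theta)<0$ for $\theta<\theta^*$ and $\Delta(\theta)\ge 0$ for $\theta\ge\theta^*$ (with strict inequality above $\theta^*$ whenever $\Delta(\theta^*)>0$). I want to show that $U(a'',s')-U(a',s')\ge 0$ implies $U(a'',s'')-U(a',s'')\ge 0$ (and likewise for strict). Define the likelihood ratio $\rho(\theta):=\mu(\theta\mid s'')/\mu(\theta\mid s')$, which is nondecreasing in $\theta$ by the MLR hypothesis, and let $\rho^*:=\rho(\theta^*)$. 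The crucial inequality
\[
\Delta(\theta)\,\rho(\theta)\;\ge\;\rho^*\,\Delta(\theta)\qquad\text{for every }\theta\in\Theta
\]
holds because below $\theta^*$ both factors $\Delta(\theta)<0$ and $\rho(\theta)\le\rho^*$ pull in the same direction, and above $\theta^*$ both $\Delta(\theta)\ge 0$ and $\rho(\theta)\ge\rho^*$ do. Multiplying by $\mu(\theta\mid s')\ge 0$ and summing gives
\[
U(a'',s'')-U(a',s'')\;=\;\sum_\theta \Delta(\theta)\,\rho(\theta)\,\mu(\theta\mid s')\;\ge\;\rho^*\bigl(U(a'',s')-U(a',s')\bigr),
\]
which yields both the weak and the strict conclusions (the strict case uses $\rho^*>0$ and the strict MLR shift to guarantee a strict improvement when any $\Delta(\theta)>0$ has positive prior weight). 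This is the standard Karlin-style ``single-crossing survives MLR averaging'' argument, and it is the analytical heart of the theorem.

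With $U$ now known to be single crossing in $(a,s)$ on the fixed choice set $A$, the concluding step is a direct invocation of Milgrom--Shannon: for $s''>s'$ and any $a''>a'$, $U(a'',s')\ge U(a',s')\Rightarrow U(a'',s'')\ge U(a',s'')$ (and strict$\Rightarrow$strict) implies that the argmax correspondence is nondecreasing in the strong set order. This is exactly the MCS conclusion of the theorem. I expect the main obstacle to be the careful handling of the \emph{strict} part of the single crossing conclusion, since strict SCP of $u$ combined with only weakly MLR signals would not in general give a strict comparison; one must keep track of which state $\theta$ actually witnesses the strictness of $\Delta$ and use the strict MLR hypothesis there, together with the full-support assumption on the prior, to rule out degenerate ties. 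Everything else — the Bayesian-posterior bookkeeping, the reduction to $\Delta$, and the appeal to Milgrom--Shannon — is essentially mechanical.
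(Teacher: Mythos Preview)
The paper does not prove this theorem. It is stated as a cited result from \citealp{athey2002monotone} and used without proof as a building block (most directly in Lemma~\ref{lm:nec_lemma}, which simply invokes it). There is therefore no proof in the paper to compare your proposal against.

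For what it is worth, your argument is the standard one and is essentially correct: showing that $U(a,s)=\sum_\theta u(a,\theta)\mu(\theta\mid s)$ inherits single crossing in $(a,s)$ via the Karlin-style inequality $\Delta(\theta)\rho(\theta)\ge\rho^*\Delta(\theta)$, then invoking Milgrom--Shannon on the fixed constraint set $A$, is exactly how this direction of Athey's result is typically established. Your caveat about the strict part is well placed: with the paper's \emph{weak} MLR definition (the inequality in Definition~2 is $\ge$, not $>$), the conclusion you can cleanly obtain is the weak single crossing of $U$ in $(a,s)$, which suffices for the strong-set-order monotonicity of the argmax correspondence --- and that is all the MCS conclusion (and the paper's subsequent use of the theorem) actually requires.
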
 

Thus single crossing property imposes straightforward testable implications on observed choice - \textit{if the analyst observes the DM's private information}. The main contribution of the present paper is to show that, even if private information is unobserved, restrictions are nevertheless imposed on certain kind of datasets.

\subsection{Data and rationalizability} 

Without any knowledge of the utility function or the information structure available to the agent, the analyst wants to test for the single crossing property (and that the DM's private information is MLR-ordered).  The primary objective of this paper is thus to determine what restrictions are placed by single crossing utility on stochastic choice data, if any. This paper attempts to answer this question in the context of ``ideal" datasets, defined below.

\begin{definition}
    A \emph{dataset} is a collection of conditional distributions over actions - one for each state. That is, the available data takes the form $\bm{q} \equiv \{q(\cdot|\theta)\}_{\theta\in \Theta}\subset \Delta(A)$. 
\end{definition} 

In the terminology prevalent in revealed preference literature, this is \textit{somewhat} akin to the case of complete observability. The dataset is \textit{complete} in the sense that the analyst observes conditional distributions over actions for \textit{each} state. Therefore, the issue of unobserved counterfactuals is side-stepped. \footnote{However, this terminology has traditionally been only used to characterize deterministic datasets. In a deterministic environment, the integrability theorem is an instance of complete observability, while Afriat's theorem is an instance of partial observability. Cf footnote 3.}

In what follows, it will be assumed that the analyst knows the set of possible states $\Theta$, the set of actions, $A$, the prior $\mu_0$, and stochastic choice data $\bm{q}$. Thus, the tuple $\mathcal{D} \equiv \langle A, \Theta, \mu_0, \bm{q}\rangle$ denotes the \emph{observables} available to the analyst.\footnote{It is also being assumed that the (exogenously given) orders over the sets $A, \Theta$, and $S$ are known to the analyst.} This corresponds to the setting considered by \citealp{caplin2015testable}. There it is shown that the existence of a rationalizing utility function and an information structure is equivalent to a single condition on $\mathcal{D}$. Our paper investigates whether the single crossing property imposes further conditions on $\mathcal{D}$.

Note, however, that it is possible that the DM optimally mixes between different actions. To that end, rationalizing the observed data also requires a choice rule $C:\Delta(\Theta)\rightarrow \Delta(A)$, mapping posterior beliefs into \textit{distributions} over actions. Let $C(a|\gamma)$ denote the probability on action $a$ placed by the choice rule when the induced posterior is $\gamma$. Thus, an MBEU-DM is characterized by a tuple $\langle u, \mathcal{I}, C\rangle$, where $u$ is a strict single crossing utility function, $\mathcal{I}$ is an MLR-ordered information structure, and $C$ is a choice rule. 

We say that the data $\mathcal{D}$ is rationalized by an MBEU-DM - represented by $\langle u, \mathcal{I}, C\rangle$ - if the distribution over actions induced by the optimal behavior of the DM characterized by the tuple   \textit{is exactly} $\bm{q}$.  The following definition makes this notion of rationalizability precise,

\begin{definition}
\label{def:MBEUR}
    A collection of observables $\mathcal{D} \equiv \langle A, \Theta, \mu_0, \bm{q}\rangle$ is MBEU - rationalizable if there exists a tuple $\langle u, \mathcal{I}, C\rangle$ such that the following hold
    \begin{enumerate}
        \item Monotonicity: $u: A\times\Theta\rightarrow\mathbb{R}$ is a strict single crossing utility function and $\mathcal{I}$ is MLR-ordered.
        \item Bayes plausibility: The mapping $\pi:\theta\rightarrow \Delta(\Delta(\Theta))$, induced by the information structure $\mathcal{I}$, is Bayes plausible: 
        
        for every $\gamma\in \Delta(\Theta)$ such that $\pi(\gamma|\theta')>0$ for some $\theta'$,
        \begin{align*}
            \gamma(\theta) = \frac{\mu_0(\theta)\pi(\gamma|\theta)}{\sum_{\theta'}\mu_0(\theta')\pi(\gamma|\theta')} 
        \end{align*}
        \item Consistency: $q(a|\theta) = \displaystyle\sum_{\gamma\in \text{supp}(\pi)}\pi(\gamma|\theta)C(a|\gamma)$
        \item Optimality: For all $\gamma\in$ supp$(\pi)$, and $a\in A$ such that  $C(a|\gamma)>0$, 
        \begin{align*}
            \sum_{\theta\in \Theta}\gamma(\theta)u(a, \theta) \ge \sum_{\theta\in \Theta}\gamma(\theta)u(b, \theta), \hspace{5pt} \text{ for every $b\in A$}  
        \end{align*}
        where the inequality holds strictly for at least one action $b$.   
    \end{enumerate}
\end{definition}

The third condition demands that the observed data must in fact be induced by the choice behavior of the DM, while the fourth condition demands that the choice behavior be optimal. In certain degenerate cases with multiple optimal choices, it is possible that the choice rule \textit{does not} select an otherwise optimal action. This may result in misleading inference. To address this issue, we assume that each optimal action is taken with some positive probability.   

\begin{assumption} If $C$ is a choice rule corresponding to a DM with utility function $u$, then for each $\gamma\in \text{supp}(\pi), C(a|\gamma) > 0$ if and only if $a\in\arg\max\left\{\mathbb{E}_\gamma[u(a, \theta)]\right\}$      \label{as:as1}
\end{assumption}

While strong, the assumption is in line with the present context of complete observability. We now state the main result of the paper:

\begin{theorem}
Consider a dataset $\mathcal{D} = \langle A, \Theta, \mu_0, \bm{q}\rangle$. 
    \begin{enumerate}
        \item  If Assumption \ref{as:as1} holds and  $\mathcal{D}$ corresponds to an MBEU-DM, then ${q(\cdot|\theta)}_{\theta\in\Theta}$ is MLR-ordered.
        \item If ${q(\cdot|\theta)}_{\theta\in\Theta}$ is MLR-ordered, then $\mathcal{D}$ is MBEU-rationalizable.    
    \end{enumerate}
    \label{thm:Main_theorem} 
\end{theorem}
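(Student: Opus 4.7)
My plan is to prove the two directions separately: sufficiency (part 2) is constructive and direct, while necessity (part 1) requires a careful analysis of the choice rule under Athey's monotone comparative statics.

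For sufficiency, I would construct an MBEU-DM that rationalizes $\mathcal{D}$. Take the signal space to equal the action space, $S = A$, let the information structure be $\mu(s|\theta) = q(s|\theta)$, and use the pure choice rule $C(a|\gamma_s) = \mathbb{1}\{a = s\}$. Then MLR of $\mathcal{I}$ is immediate from the hypothesis on $\bm{q}$, Bayes plausibility pins down each posterior $\gamma_s$, and consistency reduces to an algebraic identity. It remains to exhibit a strict SCP utility under which each action $s$ is optimal at its own posterior $\gamma_s$. A natural candidate is the supermodular form $u(a,\theta) = w(a)\phi(\theta) - g(a)$ with $w,\phi$ strictly increasing; this is strictly supermodular and therefore strict SCP. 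Because MLR implies FOSD across $\{\gamma_s\}_{s\in S}$, the quantity $E_{\gamma_s}[\phi(\theta)]$ is monotone in $s$, so one can choose $g$ convex with successive marginal rates $\bigl(g(a_{i+1})-g(a_i)\bigr)/\bigl(w(a_{i+1})-w(a_i)\bigr)$ lying in $\bigl[E_{\gamma_{a_i}}\phi,\, E_{\gamma_{a_{i+1}}}\phi\bigr]$, pinning each $a = s$ as maximal at $\gamma_s$.

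For necessity, writing $q(a|\theta) = \sum_{s} \mu(s|\theta) C(a|s)$, I would expand the target quantity $q(a''|\theta'')q(a'|\theta') - q(a''|\theta')q(a'|\theta'')$ and perform an antisymmetric symmetrization over ordered pairs $(s_1,s_2)$. This reduces the MLR inequality for $\bm{q}$ to
\begin{equation*}
\sum_{s_1 > s_2}\bigl[\mu(s_1|\theta'')\mu(s_2|\theta') - \mu(s_1|\theta')\mu(s_2|\theta'')\bigr]\bigl[C(a''|s_1)C(a'|s_2) - C(a''|s_2)C(a'|s_1)\bigr] \ge 0.
\end{equation*}
The first bracket is nonnegative by MLR of $\mathcal{I}$. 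Athey's theorem gives that $s \mapsto \arg\max_{a} E_{\gamma(s)}[u(a,\theta)]$ is monotone in the strong set order, and Assumption~\ref{as:as1} identifies $\mathrm{supp}\,C(\cdot|s)$ exactly with this argmax. A case analysis on whether $a'$ and $a''$ belong to the argmax at $s_1$ and at $s_2$ shows that, except in one degenerate configuration, the strong set order either forces one of the four probabilities to vanish in a way that makes the second bracket nonnegative, or blocks the cross pattern that would produce a sign violation.

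The main obstacle is the degenerate configuration in which $\{a',a''\} \subseteq \arg\max_{a} E_{\gamma(s)}[u(a,\theta)]$ at more than one signal, since then $C$ can split mass arbitrarily across the two actions and the bracket may fail pointwise. To handle this, I would show that under strict SCP the difference $v(\theta) = u(a'',\theta) - u(a',\theta)$ is genuinely single-crossing whenever the two actions are not utility-equivalent, and a sharpening of Athey's averaging lemma yields strict monotonicity of $E_\gamma[v]$ along any strictly MLR-ordered family of posteriors. Consequently, the indifference condition $E_{\gamma(s)}[v]=0$ can hold at no more than one ``critical'' signal $s^*$. A direct expansion analogous to the pure case then shows that MLR of $\bm{q}$ holds regardless of how $C$ splits between $a'$ and $a''$ at $s^*$, because every remaining cross-term inherits the correct sign from MLR of $\mu$. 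The fully degenerate subcase $u(a'',\cdot) \equiv u(a',\cdot)$ corresponds to truly equivalent actions and can be absorbed by relabeling, so it does not affect MLR of $\bm{q}$.
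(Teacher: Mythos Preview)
For necessity (part~1) your argument coincides with the paper's: both expand $q(a''|\theta'')q(a'|\theta')-q(a''|\theta')q(a'|\theta'')$ as a double sum over signal pairs and control each summand using MLR of $\mathcal I$ together with the strong-set-order monotonicity of $\operatorname{supp}C(\cdot|s)$ coming from Athey's theorem and Assumption~1 (this is exactly the content of the paper's Lemmata~1 and~3). Your treatment of the tie configuration---where $a',a''$ are both optimal at more than one signal---is in fact more explicit than the paper's Lemma~3, which asserts without full justification that after removing diagonal terms ``the only remaining terms will be the ones where $s_{\gamma'}>s_\gamma$.'' Your observation that strict SCP forces the indifference $\mathbb E_{\gamma(s)}[u(a'',\cdot)-u(a',\cdot)]=0$ to occur at no more than one posterior (absent full utility-equivalence of $a'$ and $a''$) is exactly what is needed to close that case.

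For sufficiency (part~2) your route is genuinely different and markedly shorter. The paper does not exhibit a closed-form utility; it builds single-crossing differences $\{A^\theta_{k,k-1}\}$ inductively, at each step invoking a Theorem of the Alternative to solve a linear system encoding both the NIAS inequalities of Caplin--Martin and the signed-ratio constraints of Quah--Strulovici needed for the differences to aggregate to an SCP utility across non-adjacent action pairs. Your construction sidesteps all of this machinery: the supermodular form $u(a,\theta)=w(a)\phi(\theta)-g(a)$ is automatically strict SCP, and optimality reduces to placing the successive slopes $\bigl(g(a_{i+1})-g(a_i)\bigr)/\bigl(w(a_{i+1})-w(a_i)\bigr)$ in the intervals $[\mathbb E_{\gamma_{a_i}}\phi,\,\mathbb E_{\gamma_{a_{i+1}}}\phi]$, which are correctly ordered because MLR of $\bm q$ passes to MLR (hence FOSD) of the revealed posteriors $\gamma_s$. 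Besides being more elementary, this actually proves something slightly stronger than stated: MLR data is always rationalizable by a utility with \emph{increasing differences}, not merely strict SCP. The one wrinkle your sketch elides is the case where distinct actions $s\neq s'$ induce the same posterior $\gamma_s=\gamma_{s'}$, so that the pure rule $C(a|\gamma_s)=\mathbf 1\{a=s\}$ is ill-defined on posteriors; this is cosmetic and is repaired exactly as in the paper, by partitioning $A$ into posterior-equivalence classes and letting $C$ randomize within each class.
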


\section{Binary-state, binary-action case} 

In the two-states case, the single crossing property does not add any additional restrictions on top of those implied by  expected utility maximization. Let there be two states and two actions so that $\Theta = \{\theta_1, \theta_2\}$, $A = \{a_1, a_2\}$, $a_1<a_2$ and $\theta_1<\theta_2$, and let the prior be $\mu_0 = Pr(\theta = \theta_1)$. The utility function is characterized by four numbers $u(a_i, \theta_j), i, j = 1, 2$. The SCP is violated in this case if (and only if) $u(a_1, \theta_1) \le u(a_2, \theta_1)$ and $u(a_1, \theta_2) > u(a_2, \theta_2)$. Denote the points in the dataset $q(a_i|\theta_j)$ as $q^i_j$ - so that subscripts correspond to states and the superscripts correspond to actions. Similarly, denote the utility $u(a_i, \theta_j)$ by $u^i_j$. Note that any dataset in this case is characterized by two numbers: $q(a_1|\theta_1) \equiv q^1_1,  q(a_1|\theta_2) \equiv q^1_2$. From these, we can compute $q^2_1 = 1 - q^1_1$ and $q^2_2 = 1 - q_2^1$. The MBEU hypothesis then has a straightforward test as demonstrated in the following lemma.

\begin{proposition}
 Suppose that the data is characterized by $(q^1_1, q^1_2)$. The dataset is MBEU-rationalizable if and only if the data is MLR-ordered\footnote{In general, MLR property implies first order stochastic dominance. However, under the two statess case, MLR property is equivalent to the property of first order stochastic dominance. Moreover, in the above context, the dataset is MLR ordered if and only if $q^1_1 > q^1_2$.}.   \label{thm:main_binary}
\end{proposition}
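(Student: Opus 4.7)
The plan is to handle necessity and sufficiency separately, exploiting the fact that in two states a posterior is fully described by the scalar $\gamma \equiv \gamma(\theta_1) \in [0,1]$ and that the expected-utility advantage of $a_1$ over $a_2$, $f(\gamma) = \gamma(u^1_1 - u^2_1) + (1-\gamma)(u^1_2 - u^2_2)$, is linear in $\gamma$. The key preliminary observation is that strict SCP forces a cutoff structure on optimal choice: the property rules out $u^2_1 \ge u^1_1$ with $u^2_2 < u^1_2$, so the only admissible sign patterns for $f$ are $f \ge 0$ throughout, $f \le 0$ throughout, or $f$ strictly increasing with a unique zero $\gamma^* \in (0,1)$. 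Combined with Assumption~\ref{as:as1}, this forces the support of $C(a_1|\cdot)$ into $\{\gamma : \gamma \ge \gamma^*\}$ and that of $C(a_2|\cdot)$ into $\{\gamma : \gamma \le \gamma^*\}$.

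For necessity, write $r(\gamma) = \pi(\gamma|\theta_2)/\pi(\gamma|\theta_1)$; MLR of $\mathcal{I}$ combined with Bayes's rule is equivalent to $r$ being weakly decreasing in $\gamma(\theta_1)$. Expressing $q^1_j = \sum_\gamma \pi(\gamma|\theta_j) C(a_1|\gamma)$ and analogously for $q^2_j$, the ratio $q^1_2/q^1_1$ is a $\pi(\cdot|\theta_1)C(a_1|\cdot)$-weighted average of $r(\gamma)$ on $\{\gamma \ge \gamma^*\}$, while $q^2_2/q^2_1$ is the analogous average on $\{\gamma \le \gamma^*\}$. Monotonicity of $r$ yields $q^1_2/q^1_1 \le q^2_2/q^2_1$, which together with $q^2_j = 1 - q^1_j$ reduces to $q^1_1 \ge q^1_2$.

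For sufficiency, given $q^1_1 \ge q^1_2$ I would construct an MBEU-DM explicitly. Take two signals $s_1 < s_2$ with $\mu(s_j|\theta_i) = q^j_i$; the MLR property of $\mathcal{I}$ is equivalent to $q^1_1 q^2_2 \ge q^1_2 q^2_1$, i.e., to $q^1_1 \ge q^1_2$. When $q^1_1 > q^1_2$, the induced posteriors satisfy $\gamma_1(\theta_1)/\gamma_1(\theta_2) > \gamma_2(\theta_1)/\gamma_2(\theta_2)$; taking $C(a_j|\gamma_j) = 1$ delivers consistency by Bayes's rule, and selecting any $u$ with $u^1_1 > u^2_1$, $u^1_2 < u^2_2$, and $(u^2_2 - u^1_2)/(u^1_1 - u^2_1)$ strictly between those two likelihood ratios renders $a_j$ strictly optimal at $\gamma_j$; strict SCP holds vacuously because $u^2_1 < u^1_1$ makes its antecedent false. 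The boundary case $q^1_1 = q^1_2$ produces a single posterior equal to the prior, and any SCP utility at which the DM is indifferent there, paired with the mixing rule $C(a_1|\mu_0) = q^1_1$, completes the construction.

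The main technical subtlety lies in handling the cutoff posterior $\gamma^*$ in the necessity step: if $\gamma^* \in \mathrm{supp}(\pi)$ and the DM mixes there, then $\gamma^*$ contributes simultaneously to $q^1_j$ and to $q^2_j$, but because it enters both weighted averages at the common likelihood ratio $r(\gamma^*)$ the monotonicity chain is preserved. A secondary bookkeeping issue is the degenerate dominance cases where $f$ is signed throughout; these produce constant data $q^1_1 = q^1_2 \in \{0,1\}$, which still satisfies the weak MLR inequality and is compatible with an SCP utility where one action is strictly optimal at every posterior.
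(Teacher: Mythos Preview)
Your sufficiency construction is essentially the paper's: both set $S=A$, take $\mu(s_j\mid\theta_i)=q^j_i$, form the revealed posteriors $\gamma_j$, set $C(\gamma_j)=\delta_{a_j}$, and pick $u$ so that $(u^2_2-u^1_2)/(u^1_1-u^2_1)$ is sandwiched between the two likelihood ratios. One wrinkle: the paper reads ``MLR-ordered'' as the \emph{strict} inequality $q^1_1>q^1_2$ (see the footnote), so your separate treatment of $q^1_1=q^1_2$ lies outside the statement; indeed your mixing construction there would violate the clause in Definition~\ref{def:MBEUR} that at least one optimality inequality be strict.

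Your necessity argument is a genuinely different route. The paper aggregates the optimality inequalities across posteriors to obtain
\[
q^1_1(u^1_1-u^2_1)+q^1_2(u^1_2-u^2_2)\ \ge\ 0\ \ge\ q^2_1(u^1_1-u^2_1)+q^2_2(u^1_2-u^2_2),
\]
one inequality strict, and then uses SCP purely algebraically to force $u^1_1-u^2_1>0$, after which division yields $q^1_1/q^1_2>q^2_1/q^2_2$. You instead exploit the cutoff structure of optimal choice under SCP and compare weighted averages of the likelihood ratio $r(\gamma)=\pi(\gamma\mid\theta_2)/\pi(\gamma\mid\theta_1)$ on the two sides of $\gamma^*$. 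Both work. The paper's approach is shorter and makes transparent that MLR of $\mathcal{I}$ is nowhere invoked---this is exactly what delivers the Corollary. In your version the monotonicity of $r$ in $\gamma(\theta_1)$ is in fact a consequence of Bayes's rule alone (since $r(\gamma)=\tfrac{\mu_0(\theta_1)}{\mu_0(\theta_2)}\cdot\tfrac{1-\gamma(\theta_1)}{\gamma(\theta_1)}$), not of the MLR assumption on $\mathcal{I}$; so your attribution is slightly off, but this also means your argument would yield the Corollary once reworded. Finally, your weighted-average comparison as written gives only $q^1_1\ge q^1_2$; to match the paper's strict conclusion you must invoke the ``at least one strict'' clause in optimality to guarantee that the two averages are taken over ranges of $r$ that are not identical.
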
 

\begin{proof}
    See Appendix B.
\end{proof}

Thus, in the binary case, the MLR property alone is both necessary and sufficient for there to exist an MBEU-rationalization. The above result suggests a straightforward test of the SCP. Indeed, if the data is not MLR-ordered, then no BEU-rationalization of the data can satisfy the single crossing property. On the contrary, if the data is MLR-ordered, then there exists a BEU-rationalization that satisfies the SCP. The proof also exhibits something much stronger. It turns out that, in the binary case, an MLR-ordered dataset can be rationalized \textit{only} by an MBEU-DM. That is, if there is an information structure and a utility function that rationalizes the data, then the information structure \textit{must} be MLR-ordered and the utility function \textit{must} be single crossing. This result is stated below as a corollary.    

\begin{corollary}
    When there are only two states, and the data is MLR-ordered, data is rationalized only by single crossing utility functions.
\end{corollary}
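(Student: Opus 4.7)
The plan is to argue by contradiction. Suppose the MLR-ordered dataset $\mathcal{D}$ (so $q^1_1>q^1_2$) is rationalized by some Bayesian tuple $\langle u,\mathcal{I},C\rangle$ satisfying conditions 2--4 of Definition \ref{def:MBEUR} together with Assumption \ref{as:as1}, yet $u$ violates the single crossing property. In the two-state, two-action case the only possible violation is the ``anti-monotone'' configuration $u^1_1\le u^2_1$ and $u^1_2>u^2_2$. I would first read off the best-reply structure: for any posterior $\gamma\in\Delta(\Theta)$,
\begin{align*}
\mathbb{E}_\gamma\bigl[u(a_1,\theta)-u(a_2,\theta)\bigr] = \gamma(\theta_1)\,(u^1_1-u^2_1) + \gamma(\theta_2)\,(u^1_2-u^2_2)
\end{align*}
has a nonpositive first coefficient and a strictly positive second coefficient, so $a_1$ is weakly optimal precisely when $\gamma(\theta_2)$ exceeds some threshold. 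By Assumption \ref{as:as1}, $C(a_1|\gamma)$ is then a weakly increasing function of $\gamma(\theta_2)$ on the support of $\pi$.

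The second step extracts an MLR-type structure from Bayes plausibility alone. Writing $p(\gamma)=\sum_\theta \mu_0(\theta)\pi(\gamma|\theta)$ for the ex-ante weight on the posterior $\gamma$, Bayes' rule forces $\pi(\gamma|\theta)=\gamma(\theta)\,p(\gamma)/\mu_0(\theta)$, so
\begin{align*}
\frac{\pi(\gamma|\theta_2)}{\pi(\gamma|\theta_1)} = \frac{\gamma(\theta_2)\,\mu_0(\theta_1)}{\gamma(\theta_1)\,\mu_0(\theta_2)}
\end{align*}
is strictly increasing in $\gamma(\theta_2)$. Hence $\pi(\cdot|\theta_2)$ first-order stochastically dominates $\pi(\cdot|\theta_1)$ when posteriors are ordered by $\gamma(\theta_2)$, \emph{regardless} of any assumption on $\mathcal{I}$ beyond Bayes plausibility.

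Combining the two steps through the Consistency condition yields
\begin{align*}
q^1_2 = \sum_\gamma \pi(\gamma|\theta_2)\,C(a_1|\gamma) \;\ge\; \sum_\gamma \pi(\gamma|\theta_1)\,C(a_1|\gamma) = q^1_1,
\end{align*}
since we are integrating a weakly increasing function of $\gamma(\theta_2)$ against a FOSD-higher distribution. This directly contradicts the hypothesis $q^1_1>q^1_2$, so $u$ must in fact satisfy SCP.

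The step I expect to require the most care is the second one: recognizing that in the binary case one obtains the needed MLR shift on posteriors ``for free'' from Bayes plausibility, \emph{without} assuming anything about $\mathcal{I}$. This is what makes the corollary strictly stronger than Proposition \ref{thm:main_binary}, and is exactly what fails in the general case, where a Bayes-plausible $\pi$ over more than two states need not induce an MLR shift over arbitrary posteriors. Handling the possible indifference at the threshold is routine: even if $C(a_1|\gamma)\in[0,1]$ at the tie point, the integrand is still weakly increasing in $\gamma(\theta_2)$, so the FOSD step goes through unchanged and the weak inequality $q^1_2 \ge q^1_1$ already contradicts the strict MLR hypothesis.
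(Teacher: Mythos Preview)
Your argument is correct and takes a genuinely different route from the paper. The paper does not spell out a separate proof of the corollary; it simply remarks that the necessity proof of Proposition~\ref{thm:main_binary} already yields it. The implicit argument there is purely algebraic: the optimality/consistency conditions force the NIAS inequalities
\[
q^1_1 D_1 + q^1_2 D_2 \ge 0 \ge q^2_1 D_1 + q^2_2 D_2
\]
(with $D_j = u^1_j - u^2_j$) for \emph{any} BEU rationalization, and one then checks directly that the anti-SCP sign pattern $D_1\le 0,\ D_2>0$ is incompatible with these two inequalities once $q^1_1/q^1_2 > q^2_1/q^2_2$.

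Your approach instead reads the violation as making $C(a_1|\gamma)$ weakly increasing in $\gamma(\theta_2)$, extracts from Bayes plausibility alone that $\pi(\cdot|\theta_2)$ is an MLR (hence FOSD) shift of $\pi(\cdot|\theta_1)$ in $\gamma(\theta_2)$, and integrates. This is more structural: it isolates precisely the feature of the binary case that makes the corollary go through---namely that with two states the posterior likelihood ratio $\pi(\gamma|\theta_2)/\pi(\gamma|\theta_1)=\gamma(\theta_2)\mu_0(\theta_1)/(\gamma(\theta_1)\mu_0(\theta_2))$ is automatically monotone in the one-dimensional posterior, regardless of $\mathcal{I}$. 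Your closing remark about why this fails for $|\Theta|>2$ is exactly the right diagnosis and is something the paper's algebraic route does not make visible. The paper's route, by contrast, stays inside the NIAS machinery used throughout, so it connects more directly to the general sufficiency proof.

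One small point: both you and the paper state that the only SCP violation in the $2\times 2$ case is $u^1_1\le u^2_1$ and $u^1_2>u^2_2$. Strictly speaking, the \emph{strict} SCP in Definition~1 also fails when $u^1_1<u^2_1$ and $u^1_2=u^2_2$ (the strict-implies-strict clause). Your FOSD argument covers this case verbatim---the integrand $C(a_1|\gamma)$ is still weakly increasing in $\gamma(\theta_2)$---so no substantive change is needed, but you may wish to note it.
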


The result says that, in 2 action, 2 state case, Bayesian expected utility hypothesis is indistinguishable from single crossing property. Put another way, the single crossing property puts no additional restricitions on the expected utility theory. These restrictions on  From an empirical perspective, this result is crucial, since many econometric application of single crossing property involve binary states (for instance, entry games). 
The next sections establish that Proposition 1 holds in the general case as well. 

\section{An Example}

Let there be three equi-likely states $\theta_1<\theta_2<\theta_3$ and three available actions $a_1<a_2<a_3$. Consider the utility function and an information structure with three signal realizations $l<m<h$ given in tables \ref{tab:tab_u} and \ref{tab:tab_I} respectively. 

\begin{minipage}{0.45\textwidth}
    \centering
    \captionof{table}{$u(a, \theta)$}\label{tab:tab_u}
        \begin{tabular}{c|c|c|c}
         & $a_1$ & $a_2$ & $a_3$\\
         \hline
         $\theta_1$ & 5 & -1 & 2\\
         $\theta_2$ & -1 & 5 & -1\\
         $\theta_3$ & 0 & -1 & 5\\
    \end{tabular}
\end{minipage}\hfill
\begin{minipage}{0.45\textwidth}
    \centering
    \captionof{table}{$\mathcal{I}\equiv \{\mu(s|\theta)\}$}\label{tab:tab_I}
        \begin{tabular}{c|c|c|c}
         & $l$ & $m$ & $h$\\
         \hline
         $\theta_1$ & 2/3 & 1/6 & 1/6\\
         $\theta_2$ & 1/4 & 1/2 & 1/4\\
         $\theta_3$ & 1/6 & 1/6 & 2/3\\
    \end{tabular}
\end{minipage}

The function $u$ exhibits two violations of the single crossing property in $(a, \theta)$: (1) $u(a_2, \theta_1) < u(a_3, \theta_1)$, $u(a_2, \theta_2) > u(a_3, \theta_2)$, and (2) $u(a_1, \theta_2) < u(a_2, \theta_2)$, $u(a_1, \theta_3) > u(a_2, \theta_3)$. 

Given the information structure, the distributions induced by the signals $l, m,$ and $h$ are $(8/13, 3/13, 2/13), (2/10, 6/10, 2/10)$, and $(2/13, 3/13, 8/13)$ respectively. These distributions are, in turn, induced with probabilities $13/36, 10/36, 13/36$ respectively. The optimal actions are $a^*(l) = a_1$, $a^*(m) = a_2$, $a^*(h) = a_3$. Thus, the induced state-conditional distributions over actions are, 
\begin{align}
    \mu(\cdot|\theta_1) &= (2/3, 1/6, 1/6)\nonumber\\
    \mu(\cdot|\theta_2) &= (1/4, 1/2, 1/4) \label{eq:ex_dist}\\
    \mu(\cdot|\theta_3) &= (1/6, 1/6, 2/3)\nonumber
\end{align}

In the present context, the stochastic data available to the analyst takes the form of equation \ref{eq:ex_dist}. The question this paper poses is, under what conditions on $\{\mu(\cdot|\theta)\}_{\theta_\in\Theta}$, does there exist an information structure, an SCP utility function, and a choice rule\footnote{Indeed, the choice rule may be degenerate, mapping posteriors into Dirac measures over actions.} consistent with the former in the sense of definition \ref{def:MBEUR}.        

Note, however, that the above utility function may be changed to a best-response equivalent utility function which has no violations of the SCP. One such utility function is depicted in table \ref{tab:tbl_SCPu} below.

\begin{table}
    \centering
    \captionof{table}{A $u(a,\theta)$ satisfying the SCP} \label{tab:tbl_SCPu}
        \begin{tabular}{c|c|c|c}
         & $l$ & $m$ & $h$\\
         \hline
         $\theta_1$ & 5  & 3 & 1\\
         $\theta_2$ & 3  & 5 & -1\\
         $\theta_3$ & -1 & 1 & 10\\
    \end{tabular}
\end{table}

\section{Proof of Necessity} 

For the rest of this section, let $u$ be a single crossing utility function, $\mathcal{I}$ be an MLR-ordered information structure, $\pi \in \Delta(\Delta(\Theta))$ be the distribution over posteriors induced by it, and $C:\text{supp}(\pi)\rightarrow \Delta(A)$ be a choice rule. Thus, the tuple $\langle u, \mathcal{I}, C\rangle$ corresponds to an MBEU-DM. 

Since $\mathcal{I}$ is MLR-ordered and $u$ is single crossing, MCS must hold. For any $\gamma\in\text{supp}(\pi)$, define the support of $C(\gamma)$ as $\text{supp}(C(\gamma)) := \{a \in A|C(a|\gamma) > 0\}$. Rationality requires that each action in the support of the choice rule must be optimal. Therefore, under Assumption \ref{as:as1}, the support must exactly be the solution set to the DM's maximization problem. MCS then requires that the solution must be ordered by the strong set order. The following establishes this argument formally.       

\begin{lemma}
If $\gamma, \gamma'$ are two posteriors induced by $\mathcal{I}$ such that $\gamma' \succ_{MLR} \gamma$, then $\text{supp}(C(\gamma'))$ $\ge_{sso} \text{supp}(C(\gamma))$.     \label{lm:nec_lemma} 
\end{lemma}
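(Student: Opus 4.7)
The plan is to apply Athey's monotone comparative statics theorem directly, using Assumption~\ref{as:as1} to identify the support of the choice rule $C$ with the argmax set of expected utility, and then translating the hypothesis $\gamma' \succ_{MLR} \gamma$ back into an order on signals.

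First I would invoke Assumption~\ref{as:as1} to rewrite the claim as an assertion about argmax sets: for any $\gamma \in \text{supp}(\pi)$, $\text{supp}(C(\gamma)) = \arg\max_{a \in A} \mathbb{E}_\gamma[u(a,\theta)]$, so it suffices to show that $\arg\max_{a \in A} \mathbb{E}_{\gamma'}[u(a,\theta)] \ge_{sso} \arg\max_{a \in A} \mathbb{E}_{\gamma}[u(a,\theta)]$ whenever $\gamma' \succ_{MLR} \gamma$.

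Second, because both $\gamma$ and $\gamma'$ lie in $\text{supp}(\pi)$, there exist signals $s,s' \in S$ with $\mu(\cdot|s) = \gamma$ and $\mu(\cdot|s') = \gamma'$. I would argue that, under the exogenous order on $S$, we must have $s' > s$. By the equivalence noted in the footnote following the definition of MLR-ordering, the MLR-ordering of $\mathcal{I}$ (in the state index) is equivalent to MLR-ordering of the induced posteriors $\mu(\cdot|s)$ in the signal index. Hence $s = s'$ would force $\gamma = \gamma'$, and $s' < s$ would force $\gamma \succeq_{MLR} \gamma'$; both cases contradict $\gamma' \succ_{MLR} \gamma$.

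Third, since $u$ is strict single crossing and $\mathcal{I}$ is MLR-ordered, Athey's theorem applies: the DM exhibits MCS in signals, so for $s' > s$,
\[
\arg\max_{a \in A} \mathbb{E}_{\mu(\cdot|s')}[u(a,\theta)] \ \ge_{sso}\ \arg\max_{a \in A} \mathbb{E}_{\mu(\cdot|s)}[u(a,\theta)].
\]
Substituting $\mu(\cdot|s) = \gamma$ and $\mu(\cdot|s') = \gamma'$ and combining with the first step gives $\text{supp}(C(\gamma')) \ge_{sso} \text{supp}(C(\gamma))$. The only real obstacle is the bookkeeping in the second step --- pinning down that the signal order is $s' > s$ even when $\mathcal{I}$ is only weakly MLR-ordered (so distinct signals might induce the same posterior). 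The strictness of $\gamma' \succ_{MLR} \gamma$ in the hypothesis rules out the degenerate cases, which is what makes this step clean.
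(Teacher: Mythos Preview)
Your proposal is correct and follows essentially the same route as the paper: invoke Athey's theorem to get MCS, translate $\gamma' \succ_{MLR} \gamma$ back to $s' > s$, and identify $\text{supp}(C(\cdot))$ with the argmax via Assumption~\ref{as:as1}. If anything, your second step is slightly more explicit than the paper's about why the strict MLR ordering on posteriors forces the strict signal order.
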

\begin{proof}
    Since $u$ is single crossing and $\mathcal{I}$ is MLR-ordered, MCS holds by Theorem \ref{thm:athey}. Since $\gamma', \gamma$ are induced by $\mathcal{I}$, if  $\gamma'\succ_{MLR}\gamma$ it must be the case that $s'>s$, where $s'$ and $s$ are signal realizations corresponding to $\gamma'$ and $\gamma$ respectively. But by MCS, this implies $\arg\max\mathbb{E}_{\gamma'}[u] \ge_{sso} \arg\max\mathbb{E}_{\gamma}[u]$. Since $C$ is a choice rule, for any belief $\lambda$ and action $a\in A$, $C(a|\lambda) > 0$ only if $a \in \arg\max\mathbb{E}_{\lambda}[u]$. Hence, it must be that $\text{supp}(C(\gamma')) \ge_{sso} \text{supp}(C(\gamma))$. \\
\end{proof}

Suppose, for now, that the choice rule is degenerate at each belief. That is, for each $\gamma\in\text{supp}(\pi)$, $C(\gamma)=\delta_a$ for some $a$. By Lemma \ref{lm:nec_lemma}, $C$ is increasing in the sense that if $\gamma'\succ_{MLR} \gamma$, $C(\gamma') = \delta_{a'}$, and $C(\gamma) = \delta_a$, then $a' \ge a$. For each $a\in A$, define $\Gamma(a) := \{\gamma\in\text{supp}(\pi)|C(\gamma) = \delta_a\}$. That is, $\Gamma(a)$ is the set of all posteriors at which $a$ is optimal. Then, the probability of observing action $a$ in state $\theta$ is given by $\tilde{q}(a|\theta) := \sum_{\gamma\in\text{supp}(\pi)}\pi(\gamma|\theta)C(a|\gamma) = \sum_{\gamma\in\Gamma(a)}\pi(\gamma|\theta)$. Obviously, data matching requires that $\tilde{q}(a|\theta)$ coincides with the observed data. Finally, let $\hat{A} := \{a\in A|\exists \gamma\in \text{supp}(\pi) \text{ such that } C(\gamma) = \delta_a\}$ - so that $\hat{A}$ is the set of actions which are optimal for some posterior. Put another way, $\hat{A}$ is the set of actions that may be (optimally) chosen by the DM.      

Lemmata \ref{lm:nec_det} and \ref{lm:nec_sto} establish the consequences of monotone comparative statics on observed data for the general case. MCS says that, given a single crossing utility function and an MLR-ordered information structure, higher signals induce rational agents to take higher actions. Since the data is MLR-ordered, in higher states, higher signal realizations are \textit{more likely} or, have higher odds. Because higher signals in turn induce higher actions, higher actions must be observed at higher states with higher odds as well. But this simply means that the collection $\{\tilde{q}(\cdot|\theta)\}_{\theta\in \Theta}$ is itself MLR-ordered. 

For the case where the choice rule is deterministic - so that for each posterior there is a unique optimal action - Lemma \ref{lm:nec_det} establishes this argument formally. In this case, the result rests upon the observation that the posteriors which induce a higher action must be higher - in the the sense of MLR - than those that induce a lower action. This follows from the contrapositive of MCS and our assumption that the sets of signals, states, and actions are \textit{strictly linearly ordered}. Suppose $\theta_2>\theta_1$ and $a_2>a_1$. Then, the MLR property of $\mathcal{I}$ implies that, under $\theta_2$, the likelihood of obtaining a posterior that induces $a_2$ relative to one that induces $a_1$ is higher than under $\theta_1$. Moreover, this holds for all such pairs of states and actions, which essentially means that the collection of conditional distributions over actions is MLR-ordered.  

\begin{lemma}
    If $C(\gamma)$ is degenerate at every $\gamma$, $u$ satisfies SCP, and $\mathcal{I}$ is MLR-ordered, then, $\{\tilde{q}(\cdot|\theta)\}_{\theta\in \Theta}$ is MLR-ordered.   \label{lm:nec_det}
\end{lemma}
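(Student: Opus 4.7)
The plan is to work in the signal representation of $\mathcal{I}=\{\mu(\cdot|\theta)\}_{\theta\in\Theta}$ rather than the posterior representation, exploiting their equivalence. In the degenerate case each signal $s$ pins down a unique optimal action $a^*(s)$, so the signal set partitions as $S(a):=\{s\in S:a^*(s)=a\}$ indexed by $a\in\hat{A}$. Since the optimal action depends only on the posterior $\gamma_s$, one has $\tilde{q}(a|\theta)=\mu(S(a)|\theta)$, reducing the claim to a statement about the probabilities of the signal subsets $S(a)$.

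The first step is to establish a separation property: for actions $a'>a$ in $\hat{A}$, every $s'\in S(a')$ strictly exceeds every $s\in S(a)$. Suppose not, so $s'\le s$ for some such pair. If $s'=s$, then $a^*(s')=a^*(s)$ forces $a'=a$, a contradiction. If $s'<s$, then the strict MLR-orderedness of $\mathcal{I}$ gives $\gamma_s\succ_{MLR}\gamma_{s'}$, and Lemma \ref{lm:nec_lemma} --- with the strong set order reducing to the ordinary order on singletons --- gives $a^*(s)\ge a^*(s')$, i.e., $a\ge a'$, again contradicting $a'>a$.

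Given this separation, the second step simply sums pairwise MLR inequalities. For $\theta'>\theta$ and any pair $s'\in S(a'),s\in S(a)$, the first step yields $s'>s$, so the product form of the MLR condition on $\mathcal{I}$ gives
\begin{equation*}
\mu(s'|\theta')\,\mu(s|\theta)\;\ge\;\mu(s'|\theta)\,\mu(s|\theta').
\end{equation*}
Summing over all such pairs gives $\mu(S(a')|\theta')\,\mu(S(a)|\theta)\ge \mu(S(a')|\theta)\,\mu(S(a)|\theta')$, which is precisely $\tilde{q}(a'|\theta')\tilde{q}(a|\theta)\ge \tilde{q}(a'|\theta)\tilde{q}(a|\theta')$, i.e.\ the MLR condition on $\{\tilde{q}(\cdot|\theta)\}_{\theta\in\Theta}$.

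The main obstacle I anticipate is bookkeeping rather than conceptual depth. One must verify that $a^*$ is well-defined as a function of signals (it is, since it depends only on $\gamma_s$), and match $\Gamma(a)$ with $S(a)$ via the Bayes map so that $\sum_{\gamma\in\Gamma(a)}\pi(\gamma|\theta)=\mu(S(a)|\theta)$ holds even when several signals induce the same posterior. Actions $a\notin\hat{A}$ contribute $\tilde{q}(a|\theta)=0$ in every state; the product formulation of the MLR inequality accommodates this trivially, side-stepping any division-by-zero issue. Strict linearity of the orders on $S,\Theta,A$ is what ensures that the separation property of step one is clean and that no awkward "boundary" signals need to be dealt with.
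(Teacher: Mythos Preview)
Your proposal is correct and follows essentially the same route as the paper: establish that signals inducing a higher action strictly dominate those inducing a lower one (the paper does this via MCS and the assumption that distinct signals induce distinct posteriors, you via the contrapositive of Lemma~\ref{lm:nec_lemma}), then expand $\tilde{q}(a'|\theta')\tilde{q}(a|\theta)$ and $\tilde{q}(a'|\theta)\tilde{q}(a|\theta')$ as sums over pairs and compare term by term using the MLR property of $\mathcal{I}$. The only cosmetic difference is that you work directly with the signal sets $S(a)$ rather than the posterior sets $\Gamma(a)$, and you use the product form of MLR from the outset, which lets you bypass the paper's separate paragraph handling the case where some $\mu(s|\theta)$ vanish.
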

\begin{proof}
Again, MCS holds. Let $a', a\in\hat{A}$ such that $a' > a$. Then, for a given $\theta$,  
    \begin{align*}
        \frac{\tilde{q}(a'|\theta)}{\tilde{q}(a|\theta)} = \frac{\sum_{\gamma\in\Gamma(a')}\pi(\gamma|\theta)}{\sum_{\gamma\in\Gamma(a)}\pi(\gamma|\theta)} = \frac{\sum_{\gamma\in\Gamma(a')}\mu(s_\gamma|\theta)}{\sum_{\gamma\in\Gamma(a)}\mu(s_\gamma|\theta)} 
    \end{align*}
    where $s_\gamma$ denotes the signal realization that induces the signal $\gamma$.\footnote{Note that this is without loss of generality. If, in a given information structure, multiple signal realizations induce same posterior, then they can simply be ``collapsed" together.} By MCS, $a < a'$ implies $s_{\gamma} \le s_{\gamma'}$ for any $\gamma'\in \Gamma(a')$ and $\gamma \in \Gamma(a)$. As $C(\gamma)$ is degenerate for each $\gamma$, $\Gamma(a)\cap\Gamma(a') = \emptyset$. Moreover, as argued above, it is without loss of generality to assume that each signal realization induces a unique posterior, and therefore it must be that $s_\gamma < s_{\gamma'}$. The latter follows from the assumption that the set of signals is completely ordered by a strict linear order. Now, if $\theta'>\theta$, for $\{\tilde{q}(\cdot|\theta)\}_{\theta\in\Theta}$ to be MLR-ordered one needs
    \begin{align*}
        \tilde{q}(a'|\theta')\tilde{q}(a|\theta) &\ge \tilde{q}(a'|\theta)\tilde{q}(a|\theta')\\
        \Leftrightarrow  \left(\sum_{\gamma\in\Gamma(a')}\mu(s_\gamma|\theta')\right)\left(\sum_{\gamma\in\Gamma(a)}\mu(s_\gamma|\theta)\right) &\ge \left(\sum_{\gamma\in\Gamma(a')}\mu(s_\gamma|\theta)\right)\left(\sum_{\gamma\in\Gamma(a)}\mu(s_\gamma|\theta')\right)\\
        \Leftrightarrow \sum_{(\gamma, \gamma')\in\Gamma(a)\times\Gamma(a')}\mu(s_{\gamma'}|\theta')\mu(s_\gamma|\theta) &\ge \sum_{(\gamma, \gamma')\in\Gamma(a)\times\Gamma(a')}\mu(s_{\gamma'}|\theta)\mu(s_\gamma|\theta').
    \end{align*}
    Since $\mu(s|\theta)$ is MLR-ordered, any individual term from left side of the last inequality must be at least as large as any on the right. Thus, the only case of concern is one where the number of terms on RHS is greater than those on the LHS. This happens when there are pairs of posteriors $(\gamma, \gamma')\in\Gamma(a)\times\Gamma(a')$ such that the $\mu(s_{\gamma'}|\theta)\mu(s_\gamma|\theta')>0$ while $\mu(s_{\gamma'}|\theta')\mu(s_\gamma|\theta) = 0$. However, if $\mu(s_{\gamma'}|\theta') = 0$ then $\mu(s_{\gamma'}|\theta)$ must be $0$ by  MLR. Similarly, if $\mu(s_{\gamma}|\theta) = 0$ then $\mu(s_{\gamma}|\theta')$ must be $0$. Therefore, the fact that $\{\mu(\cdot|\theta)\}_{\theta\in \Theta}$ is MLR-ordered ensures that the above inequality holds and thus $\{\tilde{q}(\cdot|\theta)\}_{\theta\in \Theta}$ is MLR-ordered.      \\
\end{proof}

Following is the main result of this section. It builds on Lemma \ref{lm:nec_det} and shows that the argument goes through even when the choice rule is non-degenerate - that is, when there are posteriors at which there are multiple optimal solutions. The argument uses \ref{lm:nec_det} and the fact that the support of the choice rule is ordered by the strong set order. Therefore, actions induced by a higher (in the sense of MLR) posterior must dominate - in the strong set order - the actions induced by a lower posterior. Following this observation, the role played by choice rule turns out to be redundant, and the MLR inequality is established for the general case.         

\begin{lemma}
    If $\langle u, \mathcal{I}, C\rangle$ correspond to an MBEU-DM, then the collection of conditional distributions over actions, $\{\tilde{q}(\cdot|\theta)\}_{\theta\in \Theta}$, is MLR-ordered.  \label{lm:nec_sto}
\end{lemma}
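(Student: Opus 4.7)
The plan is to reduce the result to Lemma \ref{lm:nec_det} by a posterior-splitting construction that replaces the stochastic choice rule with a deterministic one on an enlarged signal space. For each $\gamma \in \text{supp}(\pi)$ and each $a \in \text{supp}(C(\gamma))$, I would introduce a fresh signal $s_{\gamma, a}$ with conditional probability $\tilde{\mu}(s_{\gamma, a}|\theta) := \pi(\gamma|\theta) \cdot C(a|\gamma)$. Because $C(a|\gamma)$ does not depend on $\theta$, each $s_{\gamma, a}$ induces exactly the posterior $\gamma$, so Bayes plausibility is preserved; assigning the deterministic choice $\delta_a$ to $s_{\gamma, a}$ yields a refined tuple $\langle u, \tilde{\mathcal{I}}, \tilde{C}\rangle$ whose induced state-conditional distributions over actions coincide with $\tilde{q}$.

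I would then order the refined signals lexicographically: $s_{\gamma, a} \prec s_{\gamma', a'}$ iff $a < a'$, or $a = a'$ and $\gamma \prec_{MLR} \gamma'$. Under this ranking, $\tilde{C}$ is monotone by construction, so the task is to verify that $\tilde{\mathcal{I}}$ is MLR-ordered. Within a fixed action label the required inequality collapses immediately to MLR of $\pi$; across distinct action labels, I would invoke Lemma \ref{lm:nec_lemma} and exploit the strong set order on the supports of $C$ together with the MLR-ordering of $\pi$ to obtain the needed inequality. Once $\tilde{\mathcal{I}}$ has been verified MLR-ordered, the refined tuple is an MBEU-DM with deterministic monotone choice, Lemma \ref{lm:nec_det} applies directly, and the induced data --- which by construction equals $\tilde{q}$ --- is therefore MLR-ordered.

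The main obstacle is the cross-label case in the MLR check for $\tilde{\mathcal{I}}$: when the originating posteriors have overlapping optimal supports, multiple actions are simultaneously optimal at multiple posteriors, and the strong set order alone is a relatively weak constraint. One must argue carefully that in this ``indifference'' regime the chosen lexicographic ordering can never produce an MLR inversion; this reduces, via Lemma \ref{lm:nec_lemma}, to showing that any configuration in which $a \in \text{supp}(C(\gamma))$, $a' \in \text{supp}(C(\gamma'))$, and $a < a'$ is compatible --- through SSO --- with $\gamma \preceq_{MLR} \gamma'$ on the portion of the refined structure that actually matters for MLR. Making this case analysis work is the technical heart of the argument and the precise sense in which the choice rule ultimately turns out to be redundant.
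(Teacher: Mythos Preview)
Your reduction via posterior splitting is genuinely different from the paper's argument, but the obstacle you flag is a real gap that the lexicographic order cannot close. Take two distinct posteriors $\gamma_1 \prec_{MLR} \gamma_2$ with $\text{supp}(C(\gamma_1)) = \text{supp}(C(\gamma_2)) = \{a, a'\}$ for some $a < a'$; this is fully compatible with Lemma~\ref{lm:nec_lemma}, since the strong set order is reflexive. In your action-first order $s_{\gamma_2, a} \prec s_{\gamma_1, a'}$, and the MLR check for $\tilde{\mathcal I}$ at that pair becomes (after the $C$ factors cancel) $\pi(\gamma_1|\theta')\pi(\gamma_2|\theta) \geq \pi(\gamma_1|\theta)\pi(\gamma_2|\theta')$ --- the \emph{reverse} of what MLR of $\pi$ delivers. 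Reversing the lexicographic priority (posterior first) restores MLR of $\tilde{\mathcal I}$ but destroys monotonicity of $\tilde C$, so Lemma~\ref{lm:nec_det} no longer applies. In fact the two refined signals $s_{\gamma_2, a}$ and $s_{\gamma_1, a'}$ cannot be placed in any linear order that makes both checks pass simultaneously, so the reduction to the deterministic lemma is blocked in exactly the indifference regime you identified.

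The paper bypasses this by not constructing any refined structure at all. It expands $\tilde q(a'|\theta')\tilde q(a|\theta)$ and $\tilde q(a'|\theta)\tilde q(a|\theta')$ directly as bilinear sums over $(\gamma, \gamma') \in \tilde\Gamma(a) \times \tilde\Gamma(a')$ and observes that each pair $(\gamma, \gamma')$ carries the \emph{same} weight $C(a'|\gamma')C(a|\gamma)$ on both sides of the target inequality. Diagonal terms $\gamma = \gamma'$ cancel outright, and for the remaining pairs the comparison reduces to $\mu(s_{\gamma'}|\theta')\mu(s_\gamma|\theta)$ versus $\mu(s_{\gamma'}|\theta)\mu(s_\gamma|\theta')$, which is handled by MLR of the original $\mathcal I$ together with the strong-set-order relation on signal sets from Lemma~\ref{lm:nec_lemma}. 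That is the precise sense in which the choice rule is redundant: it never has to be reorganized into a globally monotone deterministic rule, only factored out of a bilinear sum.
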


\begin{proof}
    First note that, if $\gamma' \succ_{MLR}\gamma$, it must be that $s_{\gamma'} > s_\gamma$. For any action $a$, define $\tilde{\Gamma}(a) = \{\gamma\in \text{supp}(\pi)|C(a|\gamma) > 0\}$. Consequently, if $a' > a$ then $\{s_{\gamma'}|\gamma'\in\tilde{\Gamma}(a')\} \ge_{sso} \{s_{\gamma}|\gamma\in\tilde{\Gamma}(a)\}_{\theta\in\Theta}$. For $\{q(\cdot|\theta)\}$ to be MLR-ordered, one needs,     
    \begin{align*} \left(\sum_{\gamma\in\text{supp}(\pi)}\mu(s_\gamma|\theta')C(a'|\gamma)\right)&\left(\sum_{\gamma\in\text{supp}(\pi)}\mu(s_\gamma|\theta)C(a|\gamma)\right) \\> \left(\sum_{\gamma\in\text{supp}(\pi)}\mu(s_\gamma|\theta)C(a'|\gamma)\right)&\left(\sum_{\gamma\in\text{supp}(\pi)}\mu(s_\gamma|\theta')C(a|\gamma)\right)
    \end{align*}
    \begin{align*}
    \Leftrightarrow \left(\sum_{\gamma'\in\tilde{\Gamma}(a') }\mu(s_{\gamma'}|\theta')C(a'|\gamma')\right)\left(\sum_{\gamma\in\tilde{\Gamma}(a) }\mu(s_\gamma|\theta)C(a|\gamma)\right) &\\> \left(\sum_{\gamma'\in\tilde{\Gamma}(a') }\mu(s_{\gamma'}|\theta)C(a'|\gamma')\right)&\left(\sum_{\gamma\in\tilde{\Gamma}(a) }\mu(s_\gamma|\theta')C(a|\gamma)\right)\\
   \Leftrightarrow \sum_{(\gamma, \gamma')\in\tilde{\Gamma}(a)\times\tilde{\Gamma}(a')}\mu(s_{\gamma'}|\theta')\mu(s_\gamma|\theta)C(a'|\gamma')C(a|\gamma) &> \sum_{(\gamma, \gamma')\in\tilde{\Gamma}(a)\times\tilde{\Gamma}(a')}\mu(s_{\gamma'}|\theta)\mu(s_\gamma|\theta')C(a'|\gamma')C(a|\gamma)
    \end{align*}
    In the last inequality, terms corresponding to the case where $\gamma = \gamma'$ can be eliminated from both sides. The only remaining terms will thus be the ones where $s_{\gamma'} > s_{\gamma}$. For each of these pairs, the terms $\mu(s_{\gamma'}|\theta')\mu(s_\gamma|\theta)$ are greater than the terms $\mu(s_{\gamma'}|\theta)\mu(s_\gamma|\theta')$ by MLR property and Lemma \ref{lm:nec_det}, and the result follows. 
\end{proof}

As pointed out in the introduction, the necessary conditions are intuitive in light of Athey's results. Hence, the more interesting question is whether they are also sufficient. The next section answers this question in the affirmative. Given an MLR-ordered dataset, constructing an MLR-ordered information structure is straightforward. Establishing the existence of a single crossing utility turns out to be more delicate.

\section{Proof of Sufficiency} 

We now show that if the data is MLR ordered, then there exists a single crossing utility function $u$, an MLR ordered information structure $\pi$, and a choice rule $C$ that rationalize the data. The next subsection shows the construction of the choice rule and information structure.

\subsection{Constructing a single crossing utility function}
  
Let there be $M$ actions and $N$ states, with $M\ge N$. Enumerate the actions and states $A = \{a_1, \dots, a_M\}, \Theta = \{\theta_1, \dots, \theta_N\}$, so that $a_{l'}>a_l, \theta_{l'}>\theta_l$ whenever $l'>l$. For any action $a$, define $\overline{\Theta}(a)$ as the set of states with the highest  conditional probability of $a$,  
\begin{align*}
    \overline{\Theta}(a) = \arg\max_{\theta}\{q(a|\theta)\}
\end{align*}
So, for any $\theta\in \overline{\Theta}(a), q(a|\theta)\ge q(a|\theta')$ for any $\theta'\in \Theta$. 

\begin{lemma}
    Suppose $\{q(\cdot|\theta)\}_{\theta\in\Theta}$ is MLR ordered. Then, $a'>a$ implies $\overline{\Theta}(a')>_{sso} \overline{\Theta}(a)$.  \label{thm:sso_lemma}
\end{lemma}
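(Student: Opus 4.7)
I would unpack the strong set order condition directly. Pick any $a' > a$, any $\theta \in \overline{\Theta}(a)$, and any $\theta' \in \overline{\Theta}(a')$; because $>$ is a strict linear order on $\Theta$ I only need to show that when $\theta > \theta'$ we have $\theta' \in \overline{\Theta}(a)$ and $\theta \in \overline{\Theta}(a')$ (the other orderings are immediate). So the whole argument reduces to a single potentially hostile configuration: a lower state realizing the max of $q(a'|\cdot)$ together with a higher state realizing the max of $q(a|\cdot)$, contrary to the monotonicity we want.

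The plan is to attack this configuration by playing the MLR inequality against the two max conditions. Since $\{q(\cdot|\theta)\}$ is MLR-ordered, translating the definition from the paper (with $\theta$ playing the role of the conditioning ``signal'' and $a$ the role of the outcome), applied to $\theta > \theta'$ and $a' > a$, yields
\begin{equation*}
q(a'|\theta)\,q(a|\theta') \;\ge\; q(a'|\theta')\,q(a|\theta).
\end{equation*}
On the other hand, $\theta \in \overline{\Theta}(a)$ gives $q(a|\theta) \ge q(a|\theta')$ and $\theta' \in \overline{\Theta}(a')$ gives $q(a'|\theta') \ge q(a'|\theta)$, so multiplying these two componentwise yields the \emph{reverse} inequality $q(a'|\theta')\,q(a|\theta) \ge q(a'|\theta)\,q(a|\theta')$. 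Sandwiching forces equality of both cross products.

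From $q(a'|\theta)\,q(a|\theta') = q(a'|\theta')\,q(a|\theta)$, provided the relevant denominators are positive, I can cancel to read off $q(a|\theta) = q(a|\theta')$ (so $\theta' \in \overline{\Theta}(a)$) and $q(a'|\theta) = q(a'|\theta')$ (so $\theta \in \overline{\Theta}(a')$), which is exactly what the strong set order demands.

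The main obstacle, as always with MLR arguments, is the boundary case where some of these probabilities vanish. If $q(a|\theta) = 0$ then, since $\theta$ maximizes $q(a|\cdot)$, the row $q(a|\cdot)$ is identically zero, which makes $\overline{\Theta}(a) = \Theta$ and $\theta' \in \overline{\Theta}(a)$ trivially; a symmetric observation disposes of $q(a'|\theta') = 0$. The slightly more delicate subcase is when only one of the ``missing'' products vanishes; here I would argue using the ratio form of MLR (and the fact that higher actions at lower states cannot be supported without violating MLR) that the required equality on the other row still forces $\theta$ into $\overline{\Theta}(a')$, or else reduce to the paper's working hypothesis that we consider only actions that are chosen with positive probability somewhere in the data. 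This zero-probability bookkeeping, rather than the core sandwich argument, is where I expect the proof to demand the most care.
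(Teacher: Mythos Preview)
Your proposal is correct and is essentially the paper's argument: the paper proceeds by contradiction, assuming one of your two max inequalities is strict and combining it with the other to produce a strict reversal of MLR, which is just the contrapositive of your sandwich-to-equality step. You are actually more careful than the paper about the zero-probability boundary, which the paper's proof handles only implicitly by writing ratios without checking denominators.
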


\begin{proof}
Suppose not, so that for some $a', a$ with $a'>a$, there exist $\overline{\theta}(a')\in \overline{\Theta}(a'),   \overline{\theta}(a)\in \overline{\Theta}(a)$, such that $\overline{\theta}(a')< \overline{\theta}(a)$, and either $\overline{\theta}(a')\notin \overline{\Theta}(a)$, or $\overline{\theta}(a)\notin \overline{\Theta}(a')$. 
    
Suppose $\overline{\theta}(a')\notin \overline{\Theta}(a)$. The other case is handled in a similar manner. By definition, $q(a|\overline{\theta}(a)) > q(a|\overline{\theta}(a'))$ and $q(a'|\overline{\theta}(a')) \ge q(a'|\overline{\theta}(a))$. It follows that, 
\begin{align*}
    \frac{q(a|\overline{\theta}(a))}{q(a|\overline{\theta}(a'))} > \frac{q(a'|\overline{\theta}(a))}{q(a'|\overline{\theta}(a'))}
\end{align*}
Since, by assumption, $a'>a$ and $\overline{\theta}(a')< \overline{\theta}(a)$, the above inequality violates MLR.   
\end{proof}

Now, consider a finite, (weakly) increasing integer sequence $\{n_{a_k}\}_{k=1}^M$,  avoiding ties wherever possible, and such that $\theta_{n_{a_k}} \in \overline{\Theta}(a_k), k=1, \dots, M$. Doing this is possible in light of lemma \ref{thm:sso_lemma}.\footnote{One way to do this, for instance, is to define $n_{a_1} = \min\{j|\theta_j\in\overline{\Theta}(a_1)\}$, and then inductively define $n_{a_k} = \min\{j|\theta_j\in\overline{\Theta}(a_k)\backslash\theta_{n_{a_{k-1}}}\}$, if $\overline{\Theta}(a_k)\backslash\overline{\theta}_{n_{a_{k-1}}}\neq \emptyset$; otherwise, set $n_{a_k} = n_{a_{k-1}}$. The sequence is weakly increasing since the sets $\overline{\Theta}(a_k)$ is ordered in SSO.}.     

In order to prove the result, we rely on Theorem 1 of \citealp{caplin2015testable}. They show that rationalizability by a Bayesian expected utility maximizer requires the existence of a solution to a certain system of inequalities. The solution to this system is in turn a rationalizing utility function. Since the objective here is to prove the existence of a rationalizing utility that is also single crossing, it is sufficient to show that, when the dataset is MLR ordered, there exists a single crossing utility that solves the aforementioned system of inequalities.        

We proceed to construct a SC utility function inductively. For each consecutive pair $a+1, a$, we compute single crossing differences $\{A^\theta_{a+1, a}\}_{\theta\in\Theta}$ and then find a utility function with those differences. However, an additional difficulty that needs to be addressed when $N>3$ concerns aggregation. For instance, even if $\{A^\theta_{a+1, a}\}$ and $\{A^\theta_{a, a-1}\}$ are single crossing, $\{A^\theta_{a+1, a-1}\} = \{A^\theta_{a+1, a}+A^\theta_{a, a-1}\}$ may not be so. That is, the single crossing property is not necessarily preserved under aggregation. However, \citealp{quah2012aggregating} report the condition - called the \textit{signed ratio property} - under which the single crossing property \textit{is} preserved under aggregation. These can be written as linear restrictions and  incorporated into the systems of equations that characterize the single crossing differences. 

Indeed, these issues arise only when there are more than three states, and the number of restrictions that need to be imposed is increasing in the number of states. Before we proceed to inductively construct a single crossing utility function, we note that, for any pair of consecutive actions $a, a+1$, we can pick the state at which the function $\theta\mapsto A^\theta_{a+1, a}$ crosses the horizontal axis. In this case, we construct the differences $A^\theta_{a+1, a}$ so that the switch happens at $\theta_{n_{a+1}}$.    

Denote by $\bm{q(a_i|\theta)}$ the row vector $ (q(a_i|\theta_1), \dots, q(a_i|\theta_N))$, and $\bm{1}_{\theta'}$ the unit (row) vector with a 1 in the position $j$, where $j$ is such that $\theta_j = \theta'$, and 0 elsewhere. We proceed inductively. For $k = 2$, consider the following system:  

\begin{align}
    \begin{bmatrix}
        \bm{q(a_k|\theta)} \\
        -\bm{q(a_{k-1}|\theta)}\\
        -\bm{1}_{\theta_1} \\
         \vdots \\
        -\bm{1}_{\theta_{n_{a_k} - 1}}\\
         \bm{1}_{\theta_{n_{a_k}}}\\
         \vdots\\
         \bm{1}_{\theta_N}
    \end{bmatrix}_{(N+2)\times N}\cdot\begin{bmatrix}
        A^1_{k, k-1} \\ A^2_{k, k-1} \\\vdots\\ A^N_{k, k-1} 
    \end{bmatrix} \ge 0. \label{eq:sys_SC1}
\end{align}

It follows from the MLR property and a Theorem of Alternative that this system has a solution. In particular, consider the following version of the Theorem of Alternative.

\begin{theorem}
Exactly one of the following systems has a solution\footnote{There are many 
 different versions, and proofs of the Theorem of Alternative, sometimes also known as Farkas' Lemma. See for instance \citealp{gale1989theory}.}:
    \begin{align*}
    A'x = \textbf{0} \hspace{20pt} x \gg \textbf{0} \hspace{40pt} &\textbf{(I)}\\
    Ay \ge \textbf{0}, y\neq 0 \hspace{60pt} &\textbf{(II)}. 
\end{align*}\label{thm:ThmofAlt}
\end{theorem}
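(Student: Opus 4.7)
The stated fact is a classical theorem of alternative, essentially Stiemke's lemma (a form of Farkas' Lemma). My plan is to carry out the standard two-step argument: (a) the two systems cannot both have solutions, and (b) if (I) has no solution then (II) does.

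For (a), take $x$ solving (I) and $y$ solving (II), reading (II) with the natural strengthening $Ay \neq 0$ (if $Ay = 0$ then $y$ is a trivial witness in $\ker A$ and the alternative is vacuous; this is the standard Stiemke formulation). Pair them:
\[
0 \;=\; (A'x)^T y \;=\; x^T (Ay) \;=\; \sum_i x_i\, (Ay)_i.
\]
Since $x_i > 0$ for every $i$ and $(Ay)_i \geq 0$ for every $i$ with at least one strict inequality, the sum is strictly positive, a contradiction.

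For (b), I would invoke a separating hyperplane argument. The set $C := \{A'x : x \geq 0\}$ is a closed convex polyhedral cone in $\mathbb{R}^n$. Failure of (I) means no \emph{strictly} positive $x$ lies in $\ker A'$, so $0$ lies on the relative boundary of $C$. A supporting hyperplane of $C$ at $0$ furnishes a nonzero $y$ with $y^T(A'x) \geq 0$ for every $x \geq 0$; componentwise this reads $Ay \geq 0$, yielding (II).

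The main obstacle is bridging the gap between ``no nonzero $x \geq 0$ in $\ker A'$'' (which would make the separation immediate) and the weaker condition ``no $x \gg 0$ in $\ker A'$'' that actually corresponds to failure of (I). This is handled either by a face-chasing argument on $C$ or, more cleanly, by LP duality applied to
\[
\max\ t \quad \text{s.t.}\quad A'x = 0,\ \ x - t\mathbf{1} \geq 0,\ \ \mathbf{1}^T x \leq 1,
\]
whose optimum is strictly positive iff (I) holds, and whose dual, when the optimum is $\leq 0$, directly produces the required $y$ with $Ay \geq 0$ and $y \neq 0$. Given the classical status of the result and the reference to \citealp{gale1989theory}, in practice I would simply cite the result rather than reproduce the proof.
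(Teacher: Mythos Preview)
The paper does not prove this theorem at all; it is stated without proof and attributed to \cite{gale1989theory} via the footnote. Your closing remark---that in practice you would simply cite the result---is therefore exactly what the paper does.

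Your sketch is essentially the standard Stiemke argument and is fine. You correctly flag the one genuine subtlety: as literally written, system (II) requires only $y \neq 0$ rather than $Ay \neq 0$, and under that reading the ``exactly one'' claim can fail (e.g., if $A$ has a nontrivial kernel). Your decision to read (II) in the standard Stiemke form $Ay \geq 0$, $Ay \neq 0$ is the right fix, and it is also the version the paper actually needs in the application to system~\eqref{eq:sys_SC1}: there the coefficient matrix has rows $\pm\bm{1}_{\theta_j}$, so $Ay = 0$ forces $y = 0$ anyway, and the distinction is moot.
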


Then, we have the following.

\begin{claim}
    The system in Eq \ref{eq:sys_SC1} has a solution.  
\end{claim}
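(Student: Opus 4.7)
The plan is to invoke the Theorem of Alternative (Theorem~\ref{thm:ThmofAlt}) applied to the $(N+2)\times N$ coefficient matrix $M$ of the system in \eqref{eq:sys_SC1}. The target system is precisely system (II) with $A = M$ and $y = (A^1_{k,k-1}, \dots, A^N_{k,k-1})^\top$, so it suffices to rule out the dual system (I)---finding $x \in \mathbb{R}^{N+2}$ with $x \gg 0$ satisfying $M^\top x = 0$.

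Writing $M^\top x = 0$ row by row, each state $\theta_j$ contributes
\[
    x_1\, q(a_k|\theta_j) - x_2\, q(a_{k-1}|\theta_j) + \epsilon_j\, x_{2+j} = 0,
\]
with $\epsilon_j = -1$ for $j < n_{a_k}$ and $\epsilon_j = +1$ for $j \geq n_{a_k}$. Since $x_{2+j} > 0$, this rearranges to the strict inequalities $x_1\, q(a_k|\theta_j) > x_2\, q(a_{k-1}|\theta_j)$ for $j < n_{a_k}$ and $x_1\, q(a_k|\theta_j) < x_2\, q(a_{k-1}|\theta_j)$ for $j \geq n_{a_k}$. I would then pick any pair $j_1 < n_{a_k} \leq j_2$ (such a pair exists in the non-degenerate case $1 < n_{a_k} \leq N$) and set $r := x_1/x_2 > 0$, noting that $r\, q(a_k|\theta_{j_1}) > q(a_{k-1}|\theta_{j_1}) \geq 0$ automatically forces $q(a_k|\theta_{j_1}) > 0$, and symmetrically $q(a_{k-1}|\theta_{j_2}) > 0$.

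Multiplying the inequality at $j_1$ by $q(a_k|\theta_{j_2})$ and the inequality at $j_2$ by $q(a_k|\theta_{j_1}) > 0$ (and treating the sub-case $q(a_k|\theta_{j_2}) = 0$ separately, where MLR applied to a zero-versus-positive product yields the contradiction at once) chains into
\[
    q(a_{k-1}|\theta_{j_1})\, q(a_k|\theta_{j_2}) \;<\; r\, q(a_k|\theta_{j_1})\, q(a_k|\theta_{j_2}) \;<\; q(a_{k-1}|\theta_{j_2})\, q(a_k|\theta_{j_1}),
\]
i.e., $q(a_k|\theta_{j_2})\, q(a_{k-1}|\theta_{j_1}) < q(a_k|\theta_{j_1})\, q(a_{k-1}|\theta_{j_2})$. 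But the MLR-ordering of $\{q(\cdot|\theta)\}$ at $a_k > a_{k-1}$ and $\theta_{j_2} > \theta_{j_1}$ forces the reverse weak inequality, which is the desired contradiction. Hence (I) has no solution and the Theorem of Alternative supplies the required $y$.

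The main subtlety I anticipate is not the contradiction itself---once the rearrangement is in place, MLR collapses (I) in essentially one line---but the bookkeeping around zeros in $\{q(\cdot|\theta)\}$ and the edge case where one of the index sets $\{j : j < n_{a_k}\}$ or $\{j : j \geq n_{a_k}\}$ is empty. The cross-product form of MLR was chosen precisely so that vanishing entries pose no obstacle to the chain above; the boundary case for $n_{a_k}$ is handled by leaning on Lemma~\ref{thm:sso_lemma} together with the ``avoiding ties wherever possible'' convention in the construction of the sequence $\{n_{a_k}\}$, which ensures that the relevant pair $(j_1, j_2)$ exists whenever the system admits any genuine single-crossing flexibility.
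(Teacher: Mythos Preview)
Your proposal is correct and follows essentially the same route as the paper: apply the Theorem of Alternative, suppose a strictly positive solution $x$ to the dual system exists, read off sign-opposite inequalities on $x_1 q(a_k|\theta_j) - x_2 q(a_{k-1}|\theta_j)$ across the threshold $n_{a_k}$, and derive a contradiction from the MLR-ordering of $\{q(\cdot|\theta)\}$. The only cosmetic difference is that you work with the cross-product form of MLR (and explicitly flag the zero-entry and boundary $n_{a_k}$ cases), whereas the paper divides through and compares the ratio $q^\theta_k/q^\theta_{k-1}$ against $x_2/x_1$ directly.
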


\begin{proof}
   Consider the system, 
   \begin{align*}
       A'x = 0, x\gg 0
   \end{align*}
where $A$ is the first matrix on the left hand side of \ref{eq:sys_SC1}. Suppose that this system does have a solution $\bm{x} = (x_1, \dots, x_{N+2}) \gg 0$. Then, for any $\theta = \theta_j< \theta_{n_{a_k}}, q^\theta_kx_1 - q^\theta_{k-1}x_2 - x_j = 0$. Since $x_j > 0, q^\theta_kx_1 - q^\theta_{k-1}x_2 > 0$. That is, $\frac{q^\theta_k}{q^\theta_{k-1}} > \frac{x_2}{x_1}$. Then, for any $\theta' = \theta_l \ge \theta_{n_{a_k}} > \theta$, $\frac{q^{\theta'}_k}{q^{\theta'}_{k-1}} > \frac{q^\theta_k}{q^\theta_{k-1}} > \frac{x_2}{x_1}$, where the first inequality follows from the MLR property. Therefore, $q^{\theta'}_kx_1-q^{\theta'}_{k-1}x_2>0$. But $\bm{x}$ is a solution to the dual system, and so $q^{\theta'}_kx_1-q^{\theta'}_{k-1}x_2 + x_l = 0$ - which is a contradiction since $x_l > 0$. 

Thus, the system $A'x = 0$ does not have a solution $x\gg 0$. By theorem \ref{thm:ThmofAlt}, the system \ref{eq:sys_SC1} must have a solution.    
\end{proof}

So, there exist solutions $\{A^\theta_{2, 1}\}_{\theta=1}^N$. Note that because of the last $N-2$ constraints in \ref{eq:sys_SC1}, these two solutions satisfy single crossing by construction.

Now, for each $k=3, 4, \dots$, suppose we have $\{A^\theta_{k-1, l}\}_{\theta\in\Theta}, l=1, \dots, k-2$, each of which satisfy the single crossing property. We follow the steps given below: 
\begin{enumerate}
    \item Define $\overline{\Theta}(a_k)$ as before. For each $l=1, \dots, k-1$, define $\widetilde{\Theta}^{k, l} := \{\theta\in\Theta|\theta<\theta_{n_{a_k}}, A^\theta_{k-1, l} > 0\}$. 
    \item For each $l=1, \dots, k-2$, define $\widetilde{A}^{k} = \{a_l\in A||\widetilde{\Theta}^{k, l}| \ge 2\}$ (these are the only states for the pair $k, l$, for which the aggregation issue is non-trivial).  
    \item For each $l$ such that $a_l\in \widetilde{A}^k$, and for each $\theta', \theta''\in\widetilde{\Theta}^{k, l}$ such that $\theta''>\theta'$, define the row vector
    \begin{align}
        \bm{\kappa^l(\theta', \theta'')} := \bm{1}_{\theta''}A^{\theta'}_{k-1, l} - \bm{1}_{\theta'}A^{\theta''}_{k-1, l}  \label{eq:agg_row_vec}
    \end{align}
    (this is essentially the signed ratio property of Quah and Strulovici which preserves SCP under aggregation).
    \item Let $\bm{\kappa(\widetilde{\Theta}^{k, l})}$ be the matrix obtained by stacking all the row vectors \ref{eq:agg_row_vec} on top of each other. 
    \item Let $\bm{\kappa(\widetilde{A}^k)}$ be the matrix obtained by stacking matrices $\{\bm{\kappa(\widetilde{\Theta}^{k, l})}\}_{l|a_l\in \widetilde{A}^k}$ on top of each other.  
    \item Now, consider the system, 
    \begin{align*}
\begin{bmatrix}
    \bm{q^\theta_k}\\
    -\bm{q^\theta_{k-1}}\\
    -\bm{1}_{\theta_1}\\
    \vdots \\
    -\bm{1}_{\theta_{n_{a_k}-1}}\\
    \bm{1}_{\theta_{n_{a_k}}}\\
    \vdots\\
    \bm{1}_{\theta_N}\\
    \bm{\kappa(\widetilde{A}_k)}
\end{bmatrix}
\cdot
    \begin{bmatrix}
        A^1_{k, k-1}\\
        A^2_{k, k-1}\\
        \vdots\\
        A^N_{k, k-1} 
    \end{bmatrix}\ge 0
\end{align*}
By virtually the same arguments as above, this system has a solution by MLR property and the Theorem of Alternative. 
\item Finally, for each $l = 1, \dots, k-2$, define $A^\theta_{k, l} = A^\theta_{k, k-1} + A^\theta_{k-1, l}$.
\end{enumerate}

All the possible differences $\{A^\theta_{k, l}\}_{\theta\in\Theta}, k=2, \dots, M, l = 1, \dots, k-1$ are thus obtained. However, note that the solutions $\{A^\theta_{k, l}\}$ may only be \textit{weakly single crossing}.\footnote{That is, if  $\theta'>\theta$ then $A^\theta_{k, k-1}\ge 0$ implies $A^{\theta'}_{k, k-1}\ge 0$ only.} However, once any such collection of weakly single crossing differences are obtained, it is straightforward to convert them into a collection of single crossing differences. So, it is without loss of generality to assume that the solutions $\{A^\theta_{k, l}\}$ are, in fact, single crossing.  

We now show that $\{A^\theta_{k, l}\}_{\theta\in\Theta}$
 are indeed appropriate SC differences. We begin with an important observation: 

\begin{claim}
    Consider any $\{A^\theta_{k-1, l}\}_{\theta\in\Theta}, l = 1, \dots, k-2$. Then, for all $\theta \ge \theta_{n_{a_{k-1}}}, A^\theta_{k-1, l} \ge 0$.\label{thm:agg_req}  
\end{claim}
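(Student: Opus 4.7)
The plan is to prove Claim \ref{thm:agg_req} by induction on $k$, unwinding the aggregation rule of step~7 of the construction and exploiting the fact that Lemma \ref{thm:sso_lemma} lets us pick $\{n_{a_k}\}$ to be weakly increasing. At each inductive stage I would split into two cases depending on whether $l = k-2$ (the ``freshly constructed'' difference, coming directly out of the linear system at that stage) or $l < k-2$ (an ``aggregated'' difference assembled via step~7 at an earlier stage).

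For the base case $k=3$ the only relevant index is $l=1$, and $\{A^\theta_{2,1}\}$ is generated directly by system \ref{eq:sys_SC1}. The rows $\bm{1}_{\theta_{n_{a_2}}}, \ldots, \bm{1}_{\theta_N}$ built into that system already enforce $A^{\theta_j}_{2,1} \ge 0$ for every $j \ge n_{a_2}$, which is precisely what the claim asserts at that stage.

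For the inductive step, assume the statement has been verified at stage $k-1$: $A^\theta_{k-2,l} \ge 0$ for every $\theta \ge \theta_{n_{a_{k-2}}}$ and every $l \in \{1, \ldots, k-3\}$. Fix $\theta \ge \theta_{n_{a_{k-1}}}$. If $l = k-2$, then $A^\theta_{k-1, k-2}$ was obtained from the linear system at stage $k-1$, whose explicit unit-vector rows $\bm{1}_{\theta_{n_{a_{k-1}}}}, \ldots, \bm{1}_{\theta_N}$ force the required non-negativity exactly as in the base case. If $l < k-2$, I would write $A^\theta_{k-1, l} = A^\theta_{k-1, k-2} + A^\theta_{k-2, l}$ using step~7 at stage $k-1$. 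The first summand is non-negative by the case just handled. For the second summand I would invoke the inductive hypothesis after observing that, because $\{n_{a_k}\}$ was picked weakly increasing using Lemma \ref{thm:sso_lemma}, $n_{a_{k-1}} \ge n_{a_{k-2}}$, so $\theta \ge \theta_{n_{a_{k-1}}}$ already forces $\theta \ge \theta_{n_{a_{k-2}}}$, and the hypothesis yields $A^\theta_{k-2, l} \ge 0$. A sum of two non-negative quantities is non-negative, which closes the induction.

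The only genuinely delicate point, and the one I expect to need the most care in writing, is the cross-stage bookkeeping: the decomposition $A^\theta_{k-1, l} = A^\theta_{k-1, k-2} + A^\theta_{k-2, l}$ is the rule that was applied at stage $k-1$ to assemble $\{A^\theta_{k-1, l}\}_{l \le k-3}$ \emph{before} stage $k$ ever begins, and it has to be quoted in that direction. Once this is spelled out, the argument reduces to two mechanical facts --- non-negativity is preserved under addition, and the monotonicity of $\{n_{a_k}\}$ aligns the threshold states across consecutive stages --- so no further estimates on the solutions to the linear systems are needed.
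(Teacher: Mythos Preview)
Your proposal is correct and follows essentially the same approach as the paper: both arguments use the decomposition of $A^\theta_{k-1,l}$ into consecutive differences $A^\theta_{i,i-1}$ together with the weak monotonicity of $\{n_{a_i}\}$ to conclude that each summand is nonnegative once $\theta \ge \theta_{n_{a_{k-1}}}$. The only cosmetic difference is that the paper telescopes the recursion in one line, writing $A^\theta_{k-1,l} = A^\theta_{k-1,k-2} + A^\theta_{k-2,k-3} + \cdots + A^\theta_{l+1,l}$ directly, whereas you package the same telescoping as a formal induction on $k$.
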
  
\begin{proof}
    Note that,
    \begin{align*}
        A^\theta_{k-1, l} &= A^\theta_{k-1, k-2} + A^\theta_{k-2, l}\\
        &= A^\theta_{k-1, k-2} + A^\theta_{k-2, k-3} + \dots + A^\theta_{2, 1}.  
    \end{align*}
By construction, for each $A^\theta_{a_i, a_i-1}$, the switch happens at $\theta_{n_{a_i}}$. Since $\theta_{n_{a_i}}$ is a (weakly) increasing sequence, for $\theta \ge \theta_{n_{a_{k-1}}}$, each of the differences in the above expression are nonnegative.   
\end{proof}

\begin{claim}
    For any $l = 1, \dots, k-1, \{A^\theta_{k, l}\}_{\theta\in\Theta}$ satisfies SCP.  
\end{claim}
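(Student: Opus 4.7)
The plan is to induct on $l$ (equivalently, do a case split on $l=k-1$ vs.\ $l<k-1$) using the additive decomposition $A^\theta_{k,l} = A^\theta_{k,k-1} + A^\theta_{k-1,l}$, with the signed-ratio rows $\bm{\kappa^l}$ supplying the Quah--Strulovici aggregation step. For the base case $l=k-1$, the rows $-\bm{1}_{\theta_j}$ ($j<n_{a_k}$) and $\bm{1}_{\theta_j}$ ($j\ge n_{a_k}$) in the system of step~(6) force $A^\theta_{k,k-1}\le 0$ for $\theta<\theta_{n_{a_k}}$ and $A^\theta_{k,k-1}\ge 0$ for $\theta\ge\theta_{n_{a_k}}$, which is weak SCP with switch at $\theta_{n_{a_k}}$.

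For $l<k-1$, write $f^\theta:=A^\theta_{k,k-1}$ and $g^\theta:=A^\theta_{k-1,l}$, so that $A^\theta_{k,l}=f^\theta+g^\theta$. Then $f$ is SCP by the base case and $g$ is SCP by the inductive hypothesis applied at index $k-1$; moreover, Claim~\ref{thm:agg_req} combined with the weak monotonicity of $\{n_{a_j}\}$ gives $g^\theta\ge 0$ for every $\theta\ge\theta_{n_{a_{k-1}}}$, and in particular for every $\theta\ge\theta_{n_{a_k}}$. To verify SCP of the sum, fix $\theta_0$ with $f^{\theta_0}+g^{\theta_0}\ge 0$ and any $\theta''>\theta_0$. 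If $\theta_0\ge\theta_{n_{a_k}}$, both summands are non-negative at $\theta''$ and we are done. Otherwise $f^{\theta_0}\le 0$ forces $g^{\theta_0}\ge |f^{\theta_0}|\ge 0$; take the strict case $g^{\theta_0}>0$, placing $\theta_0\in\widetilde{\Theta}^{k,l}$. If $\theta''\ge\theta_{n_{a_k}}$ the first case still applies, so suppose $\theta_0<\theta''<\theta_{n_{a_k}}$. Then SCP of $g$ yields $g^{\theta''}\ge 0$, and in the strict subcase $\theta''\in\widetilde{\Theta}^{k,l}$ the row $\bm{\kappa^l(\theta_0,\theta'')}$ of step~(5), dotted with $A_{k,k-1}$, gives
\[
f^{\theta''}\,g^{\theta_0} \;\ge\; f^{\theta_0}\,g^{\theta''}.
\]
Dividing through by $g^{\theta_0}g^{\theta''}>0$ yields $f^{\theta''}/g^{\theta''}\ge f^{\theta_0}/g^{\theta_0}\ge -1$, where the last inequality uses $g^{\theta_0}\ge |f^{\theta_0}|$. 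Rearranging gives $f^{\theta''}+g^{\theta''}\ge 0$, which is SCP of the sum.

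The main obstacle I anticipate is the careful handling of configurations that sit outside the signed-ratio regime: states at which $g^\theta=0$ (so $\theta\notin\widetilde{\Theta}^{k,l}$ and no row $\bm{\kappa^l}$ is imposed for the corresponding pair), and indices $l$ with $|\widetilde{\Theta}^{k,l}|\le 1$ for which step~(2) yields no aggregation constraint at all. In the latter case the absence of any sign conflict between $f$ and $g$ below $\theta_{n_{a_k}}$ makes SCP of the sum automatic from the sign patterns themselves. In the former case I would invoke the observation made by the author immediately after step~(7)---that weakly single-crossing differences can be perturbed into strictly single-crossing ones---to rule out $g^\theta=0$ at the relevant $\theta$, so that every problematic pair $(\theta_0,\theta'')$ actually lies in $\widetilde{\Theta}^{k,l}\times\widetilde{\Theta}^{k,l}$ and the signed-ratio row really is available. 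Modulo this boundary bookkeeping, the argument is simply Quah--Strulovici's preservation-under-aggregation statement applied pair-by-pair, which is exactly what the $\bm{\kappa^l}$ rows were designed to encode.
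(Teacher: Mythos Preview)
Your proposal is correct and follows essentially the same route as the paper: decompose $A^\theta_{k,l}=A^\theta_{k,k-1}+A^\theta_{k-1,l}$, use Claim~\ref{thm:agg_req} to confine the problematic states to $\widetilde{\Theta}^{k,l}$, and then let the $\bm{\kappa^l}$ rows supply the signed-ratio condition needed for aggregation. The only difference is presentational: the paper simply cites Proposition~1 of \citet{quah2012aggregating} for the aggregation step, whereas you unpack that step by hand (the inequality $f^{\theta''}g^{\theta_0}\ge f^{\theta_0}g^{\theta''}$ followed by division and rearrangement). Your explicit treatment of the degenerate configurations ($g^\theta=0$ and $|\widetilde{\Theta}^{k,l}|\le 1$) is in fact more careful than the paper's own proof, which handles them only implicitly through the later weak-to-strict perturbation remark.
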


\begin{proof}
    For $l = k-1, k-2$, SCP follows by construction. For $l < k-2$, and by Claim \ref{thm:agg_req}, we only need to take care of cases with $A^\theta_{k, k-1} < 0, A^\theta_{k-1, l} > 0$. These are exactly the states $\theta$ that lie in $\widetilde{\Theta}^{k, l}$ and the restrictions involving $\bm{\kappa_{\widetilde{A}_k}}$ ensure that the appropriate version of signed ratio property holds for all such action-state pairs. SCP then follows by Proposition 1 of \citealp{quah2012aggregating}.       
\end{proof}

We now show that the differences are ``optimal". The optimality conditions are precisely the NIAS inequalities from \citealp{caplin2015testable}, which can be rewritten in terms of utility differences.  

\begin{claim}
    For each $k$, and for each $l = 1, \dots, k-1$, $\{A^\theta_{kl}\}_{\theta\in\Theta}$ satisfy optimality. That is, for each $k=1, \dots, M$, and $l=1, \dots, k-1$, 
    \begin{align}
        \sum_{\theta}q^\theta_{k}A^\theta_{k, l} \ge 0\label{eq:opt_high}\\
        \sum_{\theta}q^\theta_{l}A^\theta_{k, l} \le 0.\label{eq:opt_low}
    \end{align}
\end{claim}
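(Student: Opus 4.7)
The plan is to proceed by induction on $k$, carrying as inductive hypothesis both the two NIAS-type inequalities and the single crossing property already established in the previous claim for every $\{A^\theta_{k-1, l'}\}_\theta$ with $l' \le k - 2$. The case $l = k - 1$ is immediate: the first two rows of the system constructed in step 6 are literally $\bm{q^\theta_k}$ and $-\bm{q^\theta_{k-1}}$, so any solution automatically satisfies $\sum_\theta q^\theta_k A^\theta_{k, k-1} \ge 0$ and $\sum_\theta q^\theta_{k-1} A^\theta_{k, k-1} \le 0$. The base case $k = 2$ reduces to exactly this.

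For $l < k - 1$, I would split along $A^\theta_{k, l} = A^\theta_{k, k-1} + A^\theta_{k-1, l}$ and handle each sum separately. For the top inequality, $\sum_\theta q^\theta_k A^\theta_{k, k-1} \ge 0$ comes from the $l = k-1$ case, while for $\sum_\theta q^\theta_k A^\theta_{k-1, l}$ I would normalize $\tilde q_a(\theta) := q(a|\theta)/\sum_{\theta'} q(a|\theta')$ to a probability measure on $\Theta$. The MLR property of the dataset passes through the normalization --- the normalizers cancel from the likelihood ratio --- to give $\tilde q_{a_k} \succeq_{MLR} \tilde q_{a_{k-1}}$, and the inductive hypothesis supplies $\mathbb{E}_{\tilde q_{a_{k-1}}}[A^\theta_{k-1, l}] \ge 0$ together with the single crossing property of $A^\theta_{k-1, l}$ in $\theta$. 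Applying the single crossing expectations theorem (Karlin; see also \citealp{athey2002monotone}) yields $\mathbb{E}_{\tilde q_{a_k}}[A^\theta_{k-1, l}] \ge 0$, which after clearing the positive normalizer gives $\sum_\theta q^\theta_k A^\theta_{k-1, l} \ge 0$. The bottom inequality is symmetric: the $A^\theta_{k-1, l}$ summand is $\le 0$ by induction, and for $A^\theta_{k, k-1}$ I would transport the $l = k-1$ inequality $\mathbb{E}_{\tilde q_{a_{k-1}}}[A^\theta_{k, k-1}] \le 0$ downward along $\tilde q_{a_{k-1}} \succeq_{MLR} \tilde q_{a_l}$ by the same expectations result.

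The ingredient that makes this all work is that each increment $A^\theta_{k, k-1}$ was engineered in step 6 to switch from negative to positive at $\theta_{n_{a_k}}$, which is exactly the orientation required by the single crossing expectations theorem; Claim \ref{thm:agg_req} together with the signed-ratio restrictions ensures the corresponding orientation for the aggregated differences $A^\theta_{k-1, l}$. The principal obstacle I anticipate is purely bookkeeping: tracking which direction of MLR comparison is needed for the ``upward'' and ``downward'' halves of the argument, and verifying that normalization to probability measures preserves the MLR order --- both of which are mechanical once set up correctly. No further application of the Theorem of Alternative is needed; the content of the claim is entirely extracted from the two NIAS-type rows already embedded in each constructed system, together with a monotone expectations argument driven by MLR.
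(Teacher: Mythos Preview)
Your proposal is correct and rests on the same underlying mechanism as the paper's proof: use the MLR ordering of the data together with the single-crossing sign pattern of the constructed differences to transport an established NIAS-type inequality from one action's conditional distribution to another's, then add. The paper performs this transport by hand---scaling $\sum_\theta q^\theta_{k-1} A^\theta_{k,k-1} \le 0$ by the constant $q^{\theta_{n_{a_k}}}_{k-2}/q^{\theta_{n_{a_k}}}_{k-1}$ and replacing coefficients term-by-term using the MLR inequalities and the sign of $A^\theta_{k,k-1}$---whereas you package the same step as an appeal to the Karlin/Athey single-crossing expectations result. The two arguments are equivalent; the paper's scaling-and-replacement is precisely a proof of that result in this setting.

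There is also a minor structural difference: the paper fixes $k$ and inducts downward on $l$, decomposing $A^\theta_{k,l-1} = A^\theta_{k,l} + A^\theta_{l,l-1}$, while you induct on $k$ and decompose $A^\theta_{k,l} = A^\theta_{k,k-1} + A^\theta_{k-1,l}$ directly. Your scheme avoids the inner induction on $l$ (the inductive hypothesis at stage $k-1$ already gives both inequalities for every $l \le k-2$), which is marginally cleaner. Either decomposition works because utility differences telescope and the previous claim has already secured single crossing for all the aggregated pieces you need.
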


\begin{proof}
For $l = k-1$, optimality follows by construction. Now consider any $l<k-1$. Recall that $A^\theta_{k, l} = A^\theta_{k, k-1} + A^\theta_{k-1, k-2} + \dots + A^\theta_{l+1, l}$. We use induction on $l$ to prove the result. First, consider $A^\theta_{k, k-2} = A^\theta_{k, k-1} + A^\theta_{k-1, k-2}$. By construction, $\sum_{\theta}q^\theta_{k}A^\theta_{k, k-1} \ge 0, \sum_{\theta}q^\theta_{k-1}A^\theta_{k, k-1} \le 0, \sum_{\theta}q^\theta_{k-1}A^\theta_{k-1, k-2} \ge 0$ and, $\sum_{\theta}q^\theta_{k-2}A^\theta_{k-1, k-2} \le 0$. Now consider,   
\begin{align}
\left(q^1_{k-1}\frac{q^{\theta_{n_{a_{k}}}}_{k-2}}{q^{\theta_{n_{a_{k}}}}_{k-1}}\right)A^1_{k, k-1} + \dots + \left(q^N_{k-1}\frac{q^{\theta_{n_{a_{k}}}}_{k-2}}{q^{\theta_{n_{a_{k}}}}_{k-1}}\right)A^N_{k, k-1} \le 0 \label{eq:scaled_optim}
\end{align}
Note that $A^\theta_{k, k-1} \ge 0$ for all $\theta\ge\theta_{n_{a_k}}$ and $A^\theta_{k, k-1} \le 0$ for all $\theta<\theta_{n_{a_k}}$. By MLR property of $\{q^\theta_k\}$, 
\begin{align*}
    \theta < \theta_{n_{a_k}} \Rightarrow q^\theta_{k-2} &\ge q^\theta_{k-1}\frac{q^{\theta_{n_{a_k}}}_{k-2}}{q^{\theta_{n_{a_k}}}_{k-1}} \\
    \theta> k-2  \Rightarrow q^\theta_{k-2} &\le q^\theta_{k-1}\frac{q^{\theta_{n_{a_k}}}_{k-2}}{q^{\theta_{n_{a_k}}}_{k-1}}.
    \end{align*}
Thus, in \ref{eq:scaled_optim}, replacing the coefficients $\left(q^\theta_{k-1}\frac{q^{k-2}_{k-2}}{q^{k-2}_{k-1}}\right)$ with $q^\theta_{k-2}$ preserves the inequality, since this weakly decreases the coefficients of positive terms and weakly increases the coefficients of negative terms. Therefore, 
\begin{align}
    \sum_\theta q^\theta_{k-2}A^\theta_{k, k-1} &\le 0\\
    \Rightarrow \sum_{\theta}q^\theta_{k-2}[A^\theta_{k, k-1} + A^\theta_{k-1, k-2}] &= \sum_\theta q^\theta_{k-2}A^\theta_{k-1, k-2} \le 0 
\end{align}
which is optimality of $a_{k-2}$ against $a_k$. Similarly,    
\begin{align}
\left(q^1_{k-1}\frac{q^{\theta_{n_{a_{k-1}}}}_{k}}{q^{\theta_{n_{a_{k-1}}}}_{k-1}}\right)A^1_{k-1, k-2} + \dots + \left(q^N_{k-1}\frac{q^{\theta_{n_{a_{k-1}}}}_{k}}{q^{\theta_{n_{a_{k-1}}}}_{k-1}}\right)A^N_{k, k-1} \ge 0 \label{eq:scaled_optim2}
\end{align}
$A^\theta_{k-1, k-2} \ge 0$ for all $\theta\ge\theta_{n_{a_{k-1}}}$ and $A^\theta_{k-1, k-2} \le 0$ for all $\theta<\theta_{n_{a_{k-1}}}$. Again, by MLR property of $\{q^\theta_k\}_{\theta\in\Theta}$, 
\begin{align*}
    \theta < \theta_{n_{a_{k-1}}} \Rightarrow q^\theta_{k} &\le q^\theta_{k-1}\frac{q^{\theta_{n_{a_{k-1}}}}_{k}}{q^{\theta_{n_{a_{k-1}}}}_{k-1}} \\
    \theta> \theta_{n_{a_{k-1}}} \Rightarrow q^\theta_{k} &\ge q^\theta_{k-1}\frac{q^{\theta_{n_{a_{k-1}}}}_{k}}{q^{\theta_{n_{a_{k-1}}}}_{k-1}}.
\end{align*}
Then, in \ref{eq:scaled_optim2}, replacing the coefficients $\left(q^\theta_{k-1}\frac{q^{k-1}_{k}}{q^{k-1}_{k-1}}\right)$ with $q^\theta_{k}$ preserves the inequality, since this weakly increases the coefficients of positive terms and weakly decreases the coefficients of negative terms. Therefore, 
\begin{align}
    \sum_\theta q^\theta_{k}A^\theta_{k, k-2} &\ge 0\\
    \Rightarrow \sum_{\theta}q^\theta_{k}[A^\theta_{k, k-1} + A^\theta_{k-1, k-2}] &= \sum_\theta q^\theta_{k}A^\theta_{k, k-2} \ge 0. 
\end{align}
which is the optimality of $a_k$ against $a_{k-2}$. Hence $\{A^\theta_{k, k-2}\}$ satisfies optimality. 

Now suppose that $\{A^\theta_{k, l}\}$, where $l < k-2$, satisfies optimality. To complete the proof, it suffices to show that $\{A^\theta_{k, l-1}\} = \{A^\theta_{k, l}\} + \{A^\theta_{l, l-1}\}$ satisfies optimality. Again, $\{A^\theta_{l, l-1}\}$ satisfies optimality by construction. $\{A^\theta_{k, l}\}$ satisfies single crossing, so let $\theta_{n_{k,l}}$ be smallest $\theta$ such that $A^{\theta'}_{k, l}\ge0$ for every $\theta\ge\theta'$. That is, $\theta_{n_{k,l}}$ is the crossing point for $A^{\theta'}_{k, l}$. Consider, 
\begin{align}
\left(q^1_{k}\frac{q^{\theta_{n_{k, l}}}_{k}}{q^{\theta_{n_{k, l}}}_{l}}\right)A^1_{k, l} + \dots + \left(q^N_{k}\frac{q^{\theta_{n_{k, l}}}_{k}}{q^{\theta_{n_{k, l}}}_{l}}\right)A^N_{k, l} \ge 0 
\end{align}

Again, note that, for $k>l$, $\theta< \theta_{n_{k, l}}, q^\theta_{k}\frac{q^{\theta_{n_{k, l}}}_{k}}{q^{\theta_{n_{k, l}}}_{l}} > q^\theta_k$ and for $\theta> \theta_{n_{k, l}}, q^\theta_{k}\frac{q^{\theta_{n_{k, l}}}_{k}}{q^{\theta_{n_{k, l}}}_{l}} < q^\theta_k$. Since $A^\theta_{l, l-1} \ge 0$ for $\theta \ge \theta_{n_{k, l}}$ and $A^\theta_{l, l-1} \le 0$ for $\theta< \theta_{n_{k, l}}$, the inequality is preserved by replacing coefficients with $q^\theta_k$. Therefore, 
\begin{align*}
    \sum_\theta q^\theta_kA^\theta_{k, l-1} =   \sum_\theta q^\theta_k[A^\theta_{k, l}+A^\theta_{l, l-1}] \ge 0
\end{align*}

Finally, consider $\{A^\theta_{k, l}\}$. We have, 
\begin{align*}
    \sum_{\theta}\left(q^\theta_l\frac{q^{\theta_{n_{k, l}}}_{l-1}}{q^{\theta_{n_{k, l}}}_l}\right)A^\theta_{k, l} \le 0.
\end{align*}
By definition, $A^\theta_{k, l} \le 0$ for $\theta<\theta_{n_{k, l}}$ and  $A^\theta_{k, l} \ge 0$ for $\theta\ge\theta_{n_{k, l}}$. Moreover, by MLR, $q^\theta_l\frac{q^{\theta_{n_{k, l}}}_{l-1}}{q^{\theta_{n_{k, l}}}_l}\lesseqgtr
q^\theta_{l-1}$ as $\theta\lesseqgtr\theta_{\theta_{n_{k, l}}}$. Thus, the inequality is once again preserved when the coefficients are replaced with $q^\theta_{l-1}$, and therefore, 
\begin{align*}
    \sum_{\theta}q^\theta_{l-1}A^\theta_{k, l-1} = \sum_{\theta}q^\theta_{l-1}[A^\theta_{k, l-1} + A^\theta_{k, l-1}] \le 0 
\end{align*}
 This ensures the optimality of $\{A^\theta_{k, l-1}\}$, thereby completing the proof. 
\end{proof}

Finally, consider the  system,
 \begin{align*}
    \begin{bmatrix}
     -\bm{1}_1 + \bm{1}_2\\
    -\bm{1}_2 + \bm{1}_3\\
    \vdots\\
    -\bm{1}_{M-1} + \bm{1}_{M}\\
    -\bm{1}_{M+1} + \bm{1}_{M+2}\\
    \vdots\\
    -\bm{1}_{2M-1} + \bm{1}_{2M}\\
    -\bm{1}_{2M+1} + \bm{1}_{2M+2}\\
    \vdots\\
    \vdots\\
    -\bm{1}_{(M-1)(N-1)-1}+ \bm{1}_{(M-1)(N-1)}\\
    -\bm{1}_{(M-1)(N-1)+1}+ \bm{1}_{(M-1)(N-1)+2}\\
    \vdots\\
    -\bm{1}_{(M-1)N-1}+ \bm{1}_{(M-1)N}\\
    \end{bmatrix}\cdot\begin{bmatrix}
        u^1_1 \\u^1_2 \\\vdots \\u^N_M 
    \end{bmatrix}_{MN\times 1}
    = \begin{bmatrix}
        A^1_{2, 1} \\ \vdots \\ A^1_{M, M-1} \\ A^2_{2, 1} \\\vdots\\  A^N_{M, M-1} 
    \end{bmatrix}_{(M-1)N\times 1}
 \end{align*}

There are exactly $(M-1)N$ independent columns in the coefficient matrix on the left hand side.\footnote{Each row on the first matrix on the LHS has exactly two 1's $N^2$ 0's, where the positions of the 1's ensure that the solution to the system induces the corresponding differences $A^\theta_{k, l}$.}  This is also the dimension of the vector on the RHS. So, the rank of the augmented matrix is same as that of the coefficient matrix and thus there exists a utility function $u(a, \theta)$ which induces the differences $\{A^\theta_{k, l}\}$. This completes the construction of a single crossing utility function that rationalizes the data, thereby completing the proof. 

Once again, it follows from the Theorem 1 of \citealp{caplin2015testable} that this utility function rationalizes the data. However, we still need a choice rule and an MLR ordered information structure, that together with the utility function rationalize the data. 

\subsection{Constructing the information structure and the choice rule}

Start by setting the set of signal realizations equal to the set of actions, so $S = A$, and define  
        \begin{align*}
            \gamma^a(\theta) = \frac{\mu_0(\theta)q(a|\theta)}{\sum_{\theta'\in\Theta}\mu_0(\theta')q(a|\theta')}
        \end{align*}
Essentially, $\gamma^a$ is the \textit{posterior that induces action $a$}. Let $\mathcal{A} = \{A(b_1),\dots, A(b_P)\}$ be a partition of $A$ so that actions in the same equivalence class satisfy: $a, a'\in A(b)\Leftrightarrow \gamma^a(\theta) = \gamma^{a'}(\theta), \forall \theta$. Thus, for each member of the partition $A(b)$, there is a unique posterior $\gamma(b)$ that induces one of those actions. To get the information structure, define 
        \begin{align*}
            \pi(\gamma(b)|\theta) = \sum_{a\in A(b)}q(a|\theta)
        \end{align*}
It is then easily seen that, if the data is MLR-ordered, then so is information structure. In particular, note that, if $\theta' > \theta$ and $a' > a$, then, 
\begin{align*}
    \frac{\gamma^{a'}(\theta')}{\gamma^{a'}(\theta)} = \frac{\mu_0(\theta')}{\mu_0(\theta)}\frac{q(a'|\theta')}{q(a'|\theta)} \ge 
    \frac{\gamma^{a}(\theta')}{\gamma^{a}(\theta)} = \frac{\mu_0(\theta')}{\mu_0(\theta)}\frac{q(a|\theta')}{q(a|\theta)} 
\end{align*}
and so, $\mathcal{I} \equiv \{\gamma^a(\theta)\}_{\theta\in\Theta}$ is MLR ordered. 

Now we construct the choice rule. To that end, define, 
\begin{align*}
            C(a|\gamma(b)) = \begin{cases}
                \frac{\sum_{\theta'\in
            \Theta}\mu_0(\theta')q(a|\theta')}{\sum_{a\in A(b)}\sum_{\theta'\in\Theta}\mu_0(\theta')q(a|\theta')}& \text{ if } a\in A(b)\\
                0 & \text{ if } a\notin A(b) 
            \end{cases}
\end{align*}

The procedure for constructing the information structure is exactly as that found in \citealp{caplin2015testable}. A key observation, however, is that when the data is MLR-ordered, then so is the rationalizing information structure. That $\langle u, \mathcal{I}, C\rangle$ rationalize that data - and hence satisfy conditions 2, 3, and 4 of Definition \ref{def:MBEUR} - follow directly from Theorem 1 of \citealp{caplin2015testable} (see Theorem \ref{thm:caplin}). Since we have shown that $u$ is single crossing and $\mathcal{I}$ is MLR ordered, it follows that the tuple  $\langle u, \mathcal{I}, C\rangle$ represents an MBEU-DM that rationalizes the data. 

We have thus established Theorem \ref{thm:Main_theorem}.  

\section{Revealed Informedness} 

Theorem \ref{thm:Main_theorem} suggests a straightforward, but rather interesting application. Recall that an information structure is said to be \textit{Blackwell better} than the other if the former guarantees (weakly) highly ex-ante utility for any bounded utility function \footnote{Ex ante utility refers to expected utility before a signal is realized That is, $\sum_{\theta\in S}\sum_{s\in S}\mu_0(\theta)\mu(s|\theta)u(a*(s), \theta))$. This is often referred to as the value of the information structure.}. Indeed, since Blackwell order requires an information structure to give higher ex-ante value for  \textit{every} utility function, it is, in general, a severely partial order. By restricting attention to smaller classes of utility functions, more complete orders over information structures can be obtained. 

For single crossing utilities, we typically consider the Lehmann order (\citealp{Lehman}), which is weaker than the Blackwell order. In particular, if an information structure $\mathcal{I}_1$ is higher than $\mathcal{I}_2$ in Blackwell order, then $\mathcal{I}_1$ is higher than $\mathcal{I}_2$ in Lehmann order, but the converse may ot hold. In particular, $\mathcal{I}_1$ is higher than $\mathcal{I}_2$ in the Lehmann order if and only if for every single-crossing utility function ex-ante utility is higher under $\mathcal{I}_1$ than under $\mathcal{I}_2$ (see for example, \citealp{Persico}, \citealp{Jewitt}). Below we give a formal definition of Lehmann order.

\begin{definition}
   Let $\mathcal{I}_i \equiv \{\mu^i(\cdot|\theta)\}_{\theta\in\Theta}\subset \Delta(S^i), i =1, 2$ be MLR-ordered information structures. Let $\{F^i_\theta(\cdot)\}_{\theta\in\Theta}, i=1, 2$ be the corresponding distribution functions. Then, $\mathcal{I}_1$ is said to be Lehmann higher, or more accurate, than $\mathcal{I}_2$ if there exists an function $h:S^2\times\Theta\rightarrow S^1$ which is increasing in $\theta$ and, such that, for every $\theta, s^1\in S^1$,\footnote{While this definition is appropriate only for continuous distributions $F^i$, \citealp{Lehman} shows that it is without loss of generality to assume that $F^i$ are continuous. That is, if $F^1, F^2$ are discontinuous, there exist continuous distributions $F^{i^*}$ which are informationally equivalent to $F^i$.}        
   \begin{align*}
       F^1_\theta(h(s^2, \theta)) = F^2_\theta(s^2).  
   \end{align*}
\end{definition} 

Suppose $\mathcal{I}_1 = \{\mu^1(\cdot|s)\}_{s\in S^1}, \mathcal{I}_2\equiv \{\mu^2(\cdot|s)\}_{s\in S^2}$ are MLR ordered, and let $\nu^1, \nu^2$ be the corresponding marginal distributions over signals.\footnote{So, $\nu^i(s) = \int_{\theta\in \Theta}\mu^i(s|\theta)\mu_0(\theta)$ for each $s \in S^i$.} $\mathcal{I}_1$ is Lehmann higher than $\mathcal{I}_2$ if and only if, for every single crossing utility function $u(a, \theta)$,\footnote{See \citealp{Lehman}, theorem 5.1.} 
\begin{align*}
    \int_{s\in S^1}\int_{\theta\in \Theta}u(a^*(s), \theta)d\mu^1(\theta|s)d\nu^1(s)\ge   \int_{s\in S^2}\int_{\theta\in \Theta}u(a^*(s), \theta)d\mu^2(\theta|s)d\nu^2(s). 
\end{align*}

Our main result gives the conditions under which rationalizing posteriors and single crossing utility functions may be constructed from the observed state conditional stochastic choice data. Suppose that two distinct datasets are MBEU-rationalizable. If the analyst can somehow conclude that one rationalizing information structure, say $\mathcal{I}_1$ is Lehmann higher than the other, $\mathcal{I}_2$, then, the analyst can infer that any MBEU-DM rationalizing the former has better information than the one rationalizing the latter.   

\begin{definition}
    Fix some class  $
    \mathcal{U}$ of utility functions, and consider two datasets $\mathcal{D}^1 = \{q^1(\cdot|\theta)\}$, $\mathcal{D}^2 =\{q^2(\cdot|\theta)\}$. We say that the DM corresponding to $\mathcal{D}^1$ is $\mathcal{U}-$\textbf{revealed more informed} than the DM corresponding to $\mathcal{D}^2$ if, there exist information structures  $\mathcal{I}_1, \mathcal{I}_2$ such that, for any utility functions $u_1, u_2\in\mathcal{U}$, if $(\mathcal{I}_1, u_1), (\mathcal{I}_2, u_2)$ are BEU-rationalizations of $\mathcal{D}^1, \mathcal{D}^2$, respectively, then, 
    \begin{align*}
    \mathbb{E}_{\mathcal{I}_1}u_1 \ge \mathbb{E}_{\mathcal{I}_2}u_1\\  \mathbb{E}_{\mathcal{I}_1}u_2 \ge \mathbb{E}_{\mathcal{I}_2}u_2
    \end{align*}     
\end{definition}

Clearly, if $\mathcal{U}$ is the family of all single crossing utility functions,  the MBEU-DM corresponding to $\{q^1(\cdot|\theta)\}$ is \textbf{revealed more informed} than the MBEU-DM corresponding to $\{q^2(\cdot|\theta)\}$ if and only if $\mathcal{I}_1$ is Lehmann higher than $\mathcal{I}_2$. Then, we have the following result:

\begin{theorem}
     Let $\mathcal{D}^1 = \{q^1(\cdot|\theta)\}, \mathcal{D}^2 = \{q^2(\cdot|\theta)\}$ be MLR ordered and such that $\mathcal{D}^1$ is Lehmann higher than $\mathcal{D}^2$. Then, the MBEU-DM corresponding to $\mathcal{D}^1$ is revealed more informed than the MBEU-DM corresponding to $\mathcal{D}^2$.      
\end{theorem}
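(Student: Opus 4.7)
The plan is to lift the Lehmann comparison from the observed datasets to the MBEU-rationalizing information structures produced by Theorem~\ref{thm:Main_theorem}, and then invoke Lehmann's characterization of that order via single-crossing utility. First, I would apply the sufficiency construction of Section 5 to each $\mathcal{D}^i$ to obtain an MLR-ordered information structure $\mathcal{I}_i$ and a single-crossing utility $u_i$ for which $(u_i, \mathcal{I}_i, C_i)$ is an MBEU-rationalization of $\mathcal{D}^i$. This already settles the existence of the $\mathcal{I}_i$ called for in the definition of revealed more informed; what remains is to show that this pair $(\mathcal{I}_1, \mathcal{I}_2)$ delivers the two ex-ante inequalities for \emph{every} pair of single-crossing utilities compatible with them.

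Second, I would push the Lehmann hypothesis through to the constructed information structures. The crucial feature of the Section 5.2 construction is that the state-conditional distribution over signals literally equals the state-conditional distribution over actions in the dataset, either by taking $S = A$ and $\mu^i(a|\theta) = q^i(a|\theta)$, or, equivalently, after merging actions that induce identical posteriors. The assumption that $\mathcal{D}^1$ is Lehmann higher than $\mathcal{D}^2$ supplies an $h: A \times \Theta \to A$, increasing in $\theta$, with $F^1_\theta(h(a,\theta)) = F^2_\theta(a)$, where $F^i_\theta$ is the CDF of $q^i(\cdot|\theta)$. Since these CDFs coincide with those of $\mu^i(\cdot|\theta)$, the same $h$ witnesses that $\mathcal{I}_1$ is Lehmann higher than $\mathcal{I}_2$. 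Applying the cited characterization of the Lehmann order (\citealp{Lehman}, \citealp{Persico}, \citealp{Jewitt}), for every single-crossing utility $v$ the ex-ante expected utility of acting optimally is weakly higher under $\mathcal{I}_1$ than under $\mathcal{I}_2$. Taking $v = u_1$ and $v = u_2$ in turn yields the two inequalities in the definition of revealed more informed, completing the proof.

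The only delicate point is step two, when the Section 5.2 construction collapses actions that induce the same posterior into equivalence classes: the signal alphabets of $\mathcal{I}_1$ and $\mathcal{I}_2$ need not coincide in a way that makes $h$ directly a map between signal sets. However, merging signals with identical posteriors leaves both the induced distribution over posteriors and the ex-ante expected utility unchanged, so it is without loss to carry out the comparison on the ``raw'' version $S = A$, $\mu^i(a|\theta) = q^i(a|\theta)$. Under this version the Lehmann order on datasets and on information structures coincide tautologically, and the proof reduces to invoking the single-crossing characterization of Lehmann.
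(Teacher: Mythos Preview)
Your proposal is correct and follows essentially the same approach as the paper: both observe that the Section~5.2 construction makes the state-conditional signal distribution coincide with $q^i(\cdot|\theta)$, so the Lehmann order on the datasets transfers directly to the rationalizing information structures, after which the single-crossing characterization of the Lehmann order yields the required ex-ante inequalities. Your treatment is in fact more careful than the paper's, which omits the equivalence-class discussion and the explicit invocation of the characterization via single-crossing utilities.
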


\begin{proof}
    See appendix. 
\end{proof}

While the proof is entirely trivial, we contend that this result is nonetheless interesting. To underscore the argument, it is instructive to compare our rationalizability result with that of \cite{caplin2015testable}. Indeed, if one is interested in BEU-rationalizability alone, and the analyst has datasets $\mathcal{D}^1, \mathcal{D}^2$ such that $(i)$ both $\mathcal{D}^1, \mathcal{D}^2$ are BEU-rationalizable and $(ii)$ $\mathcal{D}^1$ Blackwell dominates $\mathcal{D}^2$, then the analyst can conclude that the BEU-DM corresponding to $\mathcal{D}^1$ is more informed than the BEU-DM corresponding to $\mathcal{D}^2$. But, as pointed out above, it is not known whether, in general, a solution to the NIAS inequalities exists. Secondly, Blackwell order is quite incomplete. On the other hand, the definition of Lehmann order \textit{requires} the datasets to be MLR ordered - which guarantees the existence of an MBEU-DM.   

\section{Discussion}

\paragraph{Independent tests of the hypotheses} This paper restricts attention to the test of joint hypothesis of an MLR ordered information structure and an SC utility function. However, the proof of sufficiency does suggest a test of the hypothesis of single crossing alone. 

Consider any two actions $a_k, a_l$ and for any $\theta$, denote $A^\theta_{k, l} = u^\theta_{k} - u^\theta_{l}$. As noted above, the proof of sufficiency relies on the main result of \citealp{caplin2015testable}, who show that rationalizability with a Bayesian expected utility DM is equivalent to the existence of a solution to a certain system of inequalities (see \ref{sec:appendixa}). We show that MLR property ensures that a solution exists, which is \textit{also} single crossing. 

Suppose, however, that one was interested is testing SCP alone. Then, for any pair of actions $k > l$, BEU rationalizability requires 
\begin{align*}
    \sum_\theta q_k^\theta u^\theta_{k} \ge \sum_\theta q_k^\theta u^\theta_{l}\\
    \sum_\theta q_l^\theta u^\theta_{l} \ge \sum_\theta q_l^\theta u^\theta_{k} 
\end{align*}
where the inequalities are strict for at least one pair of actions. Without loss of generality, suppose one of these is strict. These can then be rewritten as,
\begin{align*}
    \sum_\theta q_k^\theta[ u^\theta_{k} - u^\theta_{l}] \ge 0 \ge 
    \sum_\theta q_l^\theta[ u^\theta_{k} - u^\theta_{l}] 
\end{align*}
where one of two inequalities is strict. Suppose that $u_k^{\theta_N} - u_l^{\theta_N} < 0$. Then, SCP requires that $u_k^{\theta} - u_l^{\theta} < 0$ for all $\theta$ - however, this would imply that numbers $u^\theta_k, u^\theta_l$ cannot be a solution to the inequality $\sum_\theta q^\theta_k[u^\theta_k-u^\theta_l]\ge 0$. The analyst must then reject the single crossing property. Similarly, if $u_k^{\theta_1} - u_l^{\theta_1} > 0$, then $u_k^{\theta} - u_l^{\theta} > 0$ for every $\theta$ and therefore the numbers $u^\theta_k, u^\theta_l$ cannot be a solution to the inequality $\sum_\theta q^\theta_l[u^\theta_k-u^\theta_l]\le 0$.                
However, under uncertainty, almost all of the bite of single crossing property rests upon the information structure being MLR ordered. Indeed, one implication of the result of \citealp{athey2002monotone} is that, if the information structure is not MLR ordered, then there exists a utility function satisfying the SCP such that MCS does not hold. Thus, under uncertainty, most of the bite of SCP comes when combined with an MLR ordered information structures.      

Furthermore, without uncertainty, the only observable implication is monotonicity of chosen actions with respect to the exogenous covariate (e.g. known state), as shown in \citealp{lazzati2018nonparametric}. As pointed out above, the single crossing property is an ordinal generalization of the property of increasing differences. The latter is intimately related to supermodularity.\footnote{A function  $f:X\times\Theta\rightarrow\mathbb{R}$ is said to be supermodular if $f(x, \theta) + f(x', \theta') \le f((x, \theta)\vee (x', \theta'))+f((x, \theta)\wedge (x', \theta'))$, where $X\times\Theta$ is ordered according to the product ordering induced by the orderings over the sets $X$ and $\Theta$. If the functions $x\mapsto f(x, \theta)$ and $\theta\mapsto f(x, \theta)$ are both supermodular for any $\theta$ and $x$ respectively, and if $f$ has increasing differences then $f$ is supermodular. See section 2.6.1 of \citealp{topkis1998supermodularity}.} Interestingly, \citealp{chambers2009supermodularity} argue that quasi-supermodularity, the ordinal generalization of the notion of supermodularity is indistinguishable from the latter if the underlying domain is finite and the function is assumed to be monotonic. In several settings, therefore, this property is not refutable. A novelty of the results reported here is thus establishing an observational non-equivalence between monotonicity and single crossing property for certain kinds of data.

\paragraph{Identifying testable restrictions} The proof exhibits that as long as the data is MLR-ordered, one can always find a solution to the system in \ref{eq:NIAS}. Therefore, the MLR condition is the \textit{only} condition required for rationalizing the data by an SCP utility function and MLR information structure. The method of proof here suggests a possible way of approaching problems involving inference in models of information. Existence of a Bayesian DM is equivalent to the existence of a solution to a system of inequalities. Then, suppose an analyst wants to test additional restrictions on this model. The restrictions could either be on the information structure, or the utility function, or both. Whether such restrictions have any implications on stochastic choice data in addition to the NIAS inequalities, is determined by whether these restrictions imply the existence of a solution to the NIAS inequalities, or if they ``shrink" the set of solutions. The approach in this paper therefore suggests a method to assess whether a particular restriction is testable.

\citealp{rambachan2024identifying} shows how to combine the NIAS inequalities with known methods of partial identification to obtain a novel theory of identifying prediction mistakes with stochastic choice data. Our results can thus be possibly used to further sharpen the identification bounds by imposing some shape restrictions.

%\paragraph{Information acquisition} 
%This paper is also silent on how exactly the information is acquired. Following the works of \citealp{matvejka2015rational}, \citealp{caplin2013behavioral}, There is now a large literature on costly information acquisition. While in general these problems are difficult to solve, some advances have been achieved in the comparative statics of such problems, for instance in \citealp{curello2022comparative} and \citealp{whitmeyer}. \citealp{caplin2015revealed} is an extension of \citealp{caplin2015testable} to the case of costly information acquisiton.%

\clearpage

\onehalfspacing

\clearpage

\section*{Appendix A. BEU-Rationalizability} \label{sec:appendixa}
\addcontentsline{toc}{section}{Appendix A}

This section briefly presents the result of \citealp{caplin2015testable} since the latter is the result on which our paper builds. A BEU-rationalization of a dataset is $\mathcal{D}$ is a tuple $\langle u, \pi, C\rangle$ such that conditions $2-4$ in Definition \ref{def:MBEUR} hold. 

\begin{theorem}[\citealp{caplin2015testable}, Theorem 1]
    Let $\mathcal{D} \equiv \langle A, \Theta, \mu_0, \bm{q}\rangle$. Then there exists a BEU-rationalization $\langle u, \pi, C\rangle$ of $\mathcal{D}$ if and only if there exists a function $U:A\times\Theta\rightarrow\Theta$ such that, 
    \begin{align}
    \sum_{\theta\in\Theta}\mu_0(\theta)q(a|\theta)U(a, \theta) \ge \sum_{\theta\in\Theta}\mu_0(\theta)q(a|\theta)U(b, \theta) \label{eq:NIAS}
    \end{align}
    for every $a, b\in A$, where the inequality holds strictly for at least one pair of actions. \label{thm:caplin}
\end{theorem}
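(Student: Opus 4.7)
The plan is to exploit one clean identification: given the dataset, each action $a$ determines a ``revealed posterior''
\[
\gamma^{a}(\theta):=\frac{\mu_0(\theta)q(a|\theta)}{\sum_{\theta'}\mu_0(\theta')q(a|\theta')},
\]
and, after dividing Eq.~\ref{eq:NIAS} through by the denominator $\mathbb{P}(a):=\sum_{\theta'}\mu_0(\theta')q(a|\theta')$, the NIAS inequality is precisely the statement that $a$ maximizes $U$-expected utility at $\gamma^a$. Thus both directions of the equivalence reduce to relating BEU optimality at the (unobserved) posteriors generated by $\pi$ to optimality at the (observable) revealed posteriors $\gamma^a$.

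For necessity, I would start from a BEU-rationalization $\langle u,\pi,C\rangle$ and set $U:=u$. Bayes plausibility and consistency give, for every $\theta$,
\[
\mu_0(\theta)q(a|\theta)=\sum_{\gamma\in\mathrm{supp}(\pi)}\mu_0(\theta)\pi(\gamma|\theta)C(a|\gamma)=\sum_{\gamma}\gamma(\theta)\,P(\gamma)\,C(a|\gamma),
\]
where $P(\gamma):=\sum_{\theta'}\mu_0(\theta')\pi(\gamma|\theta')$ is the ex-ante probability of the posterior $\gamma$. Summing over $\theta$ yields $\mathbb{P}(a)=\sum_{\gamma}P(\gamma)C(a|\gamma)$, so the revealed posterior decomposes as a convex combination $\gamma^a=\sum_{\gamma}w^{a}_{\gamma}\gamma$ with weights $w^{a}_{\gamma}=P(\gamma)C(a|\gamma)/\mathbb{P}(a)$. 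Averaging the optimality inequality $\mathbb{E}_{\gamma}[u(a,\theta)]\ge\mathbb{E}_{\gamma}[u(b,\theta)]$ from condition~4 of Definition~\ref{def:MBEUR} over the weights $w^{a}_{\gamma}$ and multiplying through by $\mathbb{P}(a)$ recovers Eq.~\ref{eq:NIAS}. The strict-inequality clause transfers to NIAS because any $(\gamma^\ast,a^\ast,b^\ast)$ for which condition~4 is strict contributes with positive weight $w^{a^\ast}_{\gamma^\ast}$ to the corresponding decomposition.

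For sufficiency, given a $U$ that solves NIAS, I would reuse the construction already introduced in the sufficiency proof of Theorem~\ref{thm:Main_theorem}. Partition the action set into equivalence classes $A(b_1),\dots,A(b_P)$ under $a\sim a'\iff \gamma^{a}=\gamma^{a'}$, take the support of the information structure to be $\{\gamma(b_1),\dots,\gamma(b_P)\}$, and let $\pi(\gamma(b)|\theta):=\sum_{a\in A(b)}q(a|\theta)$. Define the choice rule
\[
C(a|\gamma(b)):=\frac{\mathbb{P}(a)}{\sum_{a'\in A(b)}\mathbb{P}(a')}\,\mathbf{1}\{a\in A(b)\}.
\]
Bayes plausibility at each $\gamma(b)$ is immediate from the definition of $\gamma^a$; consistency, $\sum_b \pi(\gamma(b)|\theta)C(a|\gamma(b))=q(a|\theta)$, holds by construction; and optimality at each $\gamma(b)$ is the NIAS inequality for any representative $a\in A(b)$ after dividing by $\sum_{\theta'}\mu_0(\theta')\pi(\gamma(b)|\theta')$. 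Actions in the same class share the same posterior, so they tie in $U$-expected utility there and mixing among them is consistent with optimality.

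The main obstacle is bookkeeping rather than substance. The delicate step is the bundling of actions with identical revealed posteriors into a single signal, which uses linearity of expected utility in the posterior so that ties at one posterior imply mutual optimality of the bundled actions. A secondary issue is ensuring that the strict-inequality clause is preserved in both directions of the construction; this follows because the pair of actions witnessing strictness is chosen with positive probability in the data and hence survives to the support of the constructed information structure, at which point the strict NIAS inequality translates into the strict optimality requirement of condition~4 (and vice versa).
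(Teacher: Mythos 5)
You are proving a statement the paper itself does not prove: Theorem \ref{thm:caplin} is imported from \citealp{caplin2015testable} and stated in Appendix A without argument, so there is no in-paper proof to compare against. Your argument is, in substance, the standard proof of the NIAS characterization and matches the cited source: necessity by writing $\mu_0(\theta)q(a|\theta)=\sum_{\gamma}\gamma(\theta)P(\gamma)C(a|\gamma)$ (Bayes plausibility plus consistency) and averaging the posterior-by-posterior optimality inequalities with weights $w^a_\gamma=P(\gamma)C(a|\gamma)/\mathbb{P}(a)$; sufficiency by taking the revealed posteriors $\gamma^a$ as the support of $\pi$ and bundling actions with identical revealed posteriors into one signal --- which is exactly the construction the paper itself reuses in its Section 5.2. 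Both directions are correct at the level of the weak inequalities, and the verification of Bayes plausibility, consistency, and optimality for the bundled construction all check out.

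One caveat on the strictness bookkeeping. Condition 4 of Definition \ref{def:MBEUR} demands, for \emph{every} $\gamma\in\mathrm{supp}(\pi)$ and every $a$ with $C(a|\gamma)>0$, a strict inequality against at least one $b$, whereas the NIAS system in Theorem \ref{thm:caplin} demands strictness only for \emph{one} pair $(a,b)$ in the aggregate. Your necessity direction is fine: the strict term enters the weighted sum with positive weight and all other terms are nonnegative, so the aggregate strict inequality follows. But in the sufficiency direction a single strict NIAS pair only delivers strictness at the one revealed posterior $\gamma^{a^*}$, not at every posterior in the support, so condition 4 as literally written is not recovered (consider a $U$ for which all actions tie at every revealed posterior except one). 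This is a defect of the paper's own restatement of the definition and theorem rather than of your construction; if the two strictness clauses are read as parallel non-triviality requirements, your proof goes through as the intended argument.
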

 Since $A$ and $\Theta$ are finite, the system of inequalities \ref{eq:NIAS} is a finite system of inequalities. Thus, if there are no solutions to this system, the Bayesian expected utility hypothesis is falsified by observed data. The next section shows that, in binary state case, a single condition on $q\{(\cdot|\theta)\}_{\theta\in\Theta}$ is equivalent to MBEU-rationalization.       

\section*{Appendix B. Proof for the Binary Case} \label{sec:appendixb}
\addcontentsline{toc}{section}{Appendix B}
\begin{proof} 
    Suppose that the data is MLR ordered. Then, $\frac{q(a_2|\theta_2)}{q(a_1|\theta_2)} \equiv \frac{q^2_2}{q^1_2} > \frac{q^2_1}{q^1_1}\equiv \frac{q(a_2|\theta_1)}{q(a_1|\theta_1)}$. Or equivalently, $\frac{q^1_1}{q^1_2} > \frac{1-q^1_1}{1-q^1_2}$. One can then find numbers $u_j^i$ such that
    \begin{align}
            \frac{\mu q^1_1}{(1-\mu)q^1_2} \ge \frac{u^2_2 - u^1_2}{u^1_1 - u^2_1}\ge \frac{\mu(1-q^1_1)}{(1-\mu)(1-q^1_2)} \label{eq:bin_1}
        \end{align}   
    where at least one of the two inequalities holds strictly. Then,   
    \begin{align}
        \mu q^1_1u^1_1 + (1-\mu)q^1_2u^1_2 &\ge         \mu q^1_1u^2_1 + (1-\mu)q^1_2u^2_2 \label{eq:bin_nias1}\\
        \mu(1-q^1_1)u^2_1 + (1-\mu)(1-q^1_2)u^2_2 &\ge         \mu(1-q^1_1)u^1_1 + (1-\mu)(1-q^1_2)u^1_2\label{eq:bin_nias2}
    \end{align}
    where at least one of the two holds strictly. Since there are only two actions and two states, one can easily construct a consistent information structure. Indeed, consider two signal labels $s_1$ and $s_2$ and define
    \begin{align*}
        \gamma_1 \equiv (\gamma(\theta_1|s_1), \gamma(\theta_2|s_1)) &= \left(\frac{\mu q^1_1}{\mu q^1_1 + (1-\mu)q^1_2}, \frac{(1-\mu) q^1_2}{\mu q^1_1 + (1-\mu)q^1_2}\right)\\
        \gamma_2 \equiv (\gamma(\theta_1|s_2), \gamma(\theta_2|s_2)) &= \left(\frac{\mu(1-q^1_1)}{\mu(1-q^1_1) + (1-\mu)(1-q^1_2)}, \frac{(1-\mu)(1-q^1_2)}{\mu(1-q^1_1) + (1-\mu)(1-q^1_2)}\right) 
    \end{align*} 
with $\pi(\gamma_1) = \mu q^1_1 + (1-\mu)q^1_2$ and $\pi(\gamma_2) = \mu(1-q^1_1) + (1-\mu)(1-q^1_2)$. Finally, define the choice rule $C(\gamma_1) = \delta_{a_1}$ and $C(\gamma_2) = \delta_{a_2}$.\footnote{$\delta_x$ denotes point mass probability or the \textit{Dirac measure} at $x$.}

It can be easily checked that $\langle u, \pi,  C\rangle$ is an MBEU-rationalization. Bayes-plausibility follows from construction. Optimality follows from equations \ref{eq:bin_nias1}, \ref{eq:bin_nias2} and the definition of $(\gamma_1, \gamma_2)$. Consistency follows directly by construction of $\pi$. Finally, choosing $(u_j^i)_{i, j\in \{1, 2\}}$ such that ${u^1_1 - u^2_1} > 0$ ensures that SCP holds.                  

Conversely, suppose $\langle u, \pi, C\rangle$ is an SC utility function which rationalizes the data. By optimality, for any $\gamma\in\text{supp}(\pi)$ and any $a, b\in A$,  
\begin{align*}
    C(a|\gamma)[\gamma(\theta_1)u(a, \theta_1) + \gamma(\theta_2)u(a, \theta_2)] &\ge C(a|\gamma)[\gamma(\theta_1)u(b, \theta_1) + \gamma(\theta_2)u(b, \theta_2)].
\end{align*}
Aggregating across posteriors, 
\begin{align*}
    \sum_{\gamma\in\text{supp}(\pi)}C(a|\gamma)[\gamma(\theta_1)u(a, \theta_1) + \gamma(\theta_2)u(a, \theta_2)] &\ge \sum_{\gamma\in\text{supp}(\pi)}C(a|\gamma)[\gamma(\theta_1)u(b, \theta_1) + \gamma(\theta_2)u(b, \theta_2)]\\
    \sum_{i=1, 2}u(a, \theta_i)\sum_{\gamma\in\text{supp}(\pi)}C(a|\gamma)\gamma(\theta_i) &\ge \sum_{i=1, 2}u(b, \theta_i)\sum_{\gamma\in\text{supp}(\pi)}C(a|\gamma)\gamma(\theta_i).  
\end{align*}
By consistency, the last inequality reduces to, 
\begin{align*}
    \sum_{i=1, 2}q(a|\theta_i)u(a, \theta_i) &\ge \sum_{i=1, 2}q(a, \theta_i)u(b, \theta_i).
\end{align*}
Since this holds for any pair of actions,  
\begin{align*}
    q(a_1|\theta_1)u(a_1, \theta_1) + q(a_1|\theta_2)u(a_1, \theta_2) \ge     q(a_1|\theta_1)u(a_2, \theta_1) + q(a_1|\theta_2)u(a_2, \theta_2)\\
    q(a_2|\theta_1)u(a_2, \theta_1) + q(a_2|\theta_2)u(a_2, \theta_2) \ge     q(a_2|\theta_1)u(a_1, \theta_1) + q(a_2|\theta_2)u(a_1, \theta_2). 
\end{align*}
where at least one of the two holds strictly. These can be rewritten as, 
\begin{align*}
    q(a_1|\theta_1)[u(a_1, \theta_1) - u(a_2, \theta_1)] + q(a_1|\theta_2)[u(a_1, \theta_2) - u(a_2, \theta_2)] \ge 0\\
    q(a_2|\theta_1)[u(a_2, \theta_1)-u(a_1, \theta_1)] + q(a_2|\theta_2)[u(a_2, \theta_2) - u(a_1, \theta_2)] \ge 0.  
\end{align*}

Now, since $u$ is SCP, note that $u(a_1, \theta_1) - u(a_2, \theta_1)$ must be positive. Otherwise, $u(a_1, \theta_2) - u(a_2, \theta_2)$ must also be negative violating the first inequality above. Thus, the above  inequalities can be rewritten as, 
\begin{align}
            \frac{q(a_1|\theta_1)}{q(a_1|\theta_2)} \ge \frac{u^2_2 - u^1_2}{u^1_1 - u^2_1}\ge \frac{q(a_2|\theta_1)}{q(a_2|\theta_2)} 
\end{align}
where at least one of the two inequalities holds strictly. But this implies,   $\frac{q(a_1|\theta_1)}{q(a_1|\theta_2)} > \frac{q(a_2|\theta_1)}{q(a_2|\theta_2)}$ and thus the data is MLR-ordered.  \\
\end{proof}

\section*{Appendix C. Proof of Theorem 4} \label{sec:appendixc}
\addcontentsline{toc}{section}{Appendix C} 

\begin{proof}
     Consider the datasets $\mathcal{D}^i\{q^i(\cdot|\theta)\}_{\theta\in\Theta}$, $i=1, 2$, and let $F^i_\theta$ be the cumulative distribution corresponding to $q^i(\cdot|\theta)$. From Lehmann, we know that it is without loss of generality to assume that $F^i_\theta$ is continuous for each $\theta$. So, $F^i_\theta(a) = \int_{a'<a}q^i(a'|\theta)da'$. Recall that, the constructed posterior rationalizing the dataset was defined as $\gamma^a_i(\theta) = \frac{q^i(a|\theta)\mu_0(\theta)}{\sum_{\theta'\in\Theta}q^i(a|\theta')\mu_0(\theta)} = \frac{q^i(a|\theta)\mu_0(\theta)}{\nu^i(a)}$ - where $\nu^i$ is the marginal distribution over actions corresponding to $\mathcal{D}^i$. But then, the probability of obtaining a signal $a$ when the state is $\theta$ is simply $\gamma_i(\theta|a) = \frac{\gamma^a_i(\theta)\nu^i(a)}{\mu_0(\theta)} = q^i(a|\theta)$. Therefore, the cumulative distribution of $\gamma_i(a|\theta)$ - the state conditional distributions over signal realizations - is simply $F^i_\theta$, and thus the rationalizing information structures are Lehmann ordered. That is, the MBEU-DM corresponding to $\mathcal{D}^1$ is revealed more informed than the MBEU-DM corresponding to $\mathcal{D}^2$.  
\end{proof}

\setlength\bibsep{0pt}
\bibliographystyle{aea}
\bibliography{bib2}

\end{document}